\newtheorem{theorem}{Theorem}
\def \bfE {\mathbb{E}}
 \def \cU {\mathcal{U}}
\def \cL{\mathcal{L}}
 \def \cD{\mathcal{D}}
\newcommand{\indep}{\perp \!\!\! \perp}
\newcommand{\notindep}{\not \! \perp \!\!\! \perp}
\begin{document}

\title{Balancing Unobserved Confounding with a Few Unbiased Ratings in Debiased Recommendations}

 \author{Haoxuan Li}
\affiliation{
  \institution{Peking University}
  \country{China}}
 \email{hxli@stu.pku.edu.cn}

 \author{Yanghao Xiao}
\affiliation{
  \institution{{University of Chinese Academy of Sciences}}
  \country{China}}
 \email{xiaoyanghao22@mails.ucas.ac.cn}

 \author{Chunyuan Zheng}
\affiliation{
  \institution{University of California, San Diego}
  \country{USA}}
 \email{czheng@ucsd.edu}

\author{Peng Wu}
 \authornote{Corresponding author.}
\affiliation{
  \institution{Beijing Technology and Business University}
  \country{China}}
 \email{pengwu@btbu.edu.cn}


%
\renewcommand{\shortauthors}{Trovato et al.}

\begin{abstract}
Recommender systems are seen as an effective tool to address information overload, but it is widely known that the presence of various biases makes direct training on large-scale observational data result in sub-optimal prediction performance. In contrast, unbiased ratings obtained from randomized controlled trials or A/B tests are considered to be the golden standard, but are costly and small in scale in reality. To exploit both types of data, recent works  proposed to use unbiased ratings to correct the parameters of the propensity or imputation models trained on the biased dataset. However, the existing methods fail to obtain accurate predictions in the presence of unobserved confounding or model misspecification. In this paper, we propose a theoretically guaranteed model-agnostic balancing approach that can be applied to any existing debiasing method with the aim of combating unobserved confounding and model misspecification. 
The proposed approach makes full use of unbiased data by alternatively correcting model parameters learned with biased data, and adaptively learning balance coefficients of biased samples for further debiasing. Extensive real-world experiments are conducted along with the deployment of our proposal on four representative debiasing methods to demonstrate the effectiveness.

\end{abstract}

\begin{CCSXML}
<ccs2012>
 <concept>
  <concept_id>10010520.10010553.10010562</concept_id>
  <concept_desc>Computer systems organization~Embedded systems</concept_desc>
  <concept_significance>500</concept_significance>
 </concept>
 <concept>
  <concept_id>10010520.10010575.10010755</concept_id>
  <concept_desc>Computer systems organization~Redundancy</concept_desc>
  <concept_significance>300</concept_significance>
 </concept>
 <concept>
  <concept_id>10010520.10010553.10010554</concept_id>
  <concept_desc>Computer systems organization~Robotics</concept_desc>
  <concept_significance>100</concept_significance>
 </concept>
 <concept>
  <concept_id>10003033.10003083.10003095</concept_id>
  <concept_desc>Networks~Network reliability</concept_desc>
  <concept_significance>100</concept_significance>
 </concept>
</ccs2012>
\end{CCSXML}

\ccsdesc[500]{Information systems~Recommender systems}

\keywords{Recommender Systems; Bias; Debias; Unobserved Confounding}

\maketitle

\section{Introduction}

Recommender systems (RS) are designed to accurately predict users' preferences and make personalized recommendations. In recent years, many studies have focused on deep learning for rating predictions, aiming to fit the collected data using proper deep model structures~\cite{cheng2016wide, he2017neural, wang2017deep, guo2017deepfm}. Despite the ease of collection and large scale of observed ratings, it is known that such data always contain various biases and fail to reflect the true preferences of users~\cite{Chen-etal2020, Wu-etal2022, li-etal-MR2023}. For instance, users always choose the desired items to rate, which causes the collected ratings to be missing not at random, and training directly on those would leads to long-tail effects~\cite{abdollahpouri2017controlling} and bias amplification~\cite{wang2021deconfounded}.

To perform debiasing directly from the biased ratings, previous studies can be summarized into three categories:
\begin{itemize}
    \item Inferring missing and biased ratings, then replacing them using pseudo-labels~\cite{marlin2012collaborative, Steck2010}. However, the data sparsity in RS and unobserved features of users and items make it difficult to estimate those missing values accurately.
    \item Estimating the probability of a rating being observed, called propensity, then reweighting the observed data using the inverse propensity~\cite{Schnabel-Swaminathan2016, saito2020ips, Wang-Zhang-Sun-Qi2019,saito2020doubly}. 
    However, the unobserved confounding, affecting both the missing mechanism and the ratings, makes it fail to completely eliminate the biases.
    \item Modeling missing mechanisms and data generating process using generative models~\cite{liang2016modeling}. However, it may leads to violation of model specification and data generating assumptions in the presence of unobserved variables, resulting in biased estimates.
\end{itemize}
It can be summarized that these methods would lead to biased estimates in the presence of unobserved confounding or model misspecification. To mitigate the effects of unobserved confounding, Robust Deconfounder (RD) proposes an adversarial learning that uses only biased ratings~\cite{ding2022addressing}. Specifically, RD assumes that the true propensity fluctuates around the nominal propensity and uses sensitivity analysis to quantify the potential impact of unobserved confounding. However, the assumption cannot be empirically verified from a data-driven way, and it is essential to relax the assumptions while reducing the bias due to the unobserved confounding and model misspecification.

In contrast to observational ratings, uniform ratings are considered the golden standard and can be obtained from A/B tests or randomized controlled trials (RCTs), but harm users' experience and are costly and time-consuming~\cite{Gilotte-etal2018, Gruson-etal2018}. Due to its small scale property, it is impractical to train prediction models directly on unbiased ratings. Recent studies propose to use a few unbiased ratings for the parameter selection of the propensity and imputation models using bi-level optimization, which has a more favorable debiasing performance compared with the RCT-free debiasing methods~\cite{Wang-etal2021, Chen-etal2021}. However, 
we show that using unbiased ratings only to correct propensity and imputation model parameters still leads to biased predictions, in the presence of unobserved confounding or model misspecification. This motivates a more sufficient use of the unbiased ratings to combat the effects of unobserved confounding.

In this paper, we propose a model-agnostic approach to balance unobserved confounding with a few unbiased ratings. Different from the previous debiasing methods, our approach enlarges the model hypothesis space to include the unbiased ideal loss. The training objective of the balancing weights is formalized as a convex optimization problem, with balancing the loss estimation between biased and unbiased ratings as constraints. Through theoretical analysis, we prove the existence of the global optimal solution. Then, we propose an efficient training algorithm to achieve the training objectives, where the balancing weights are reparameterized and updated alternatively with the prediction model. Remarkably, the proposed balancing algorithm can be applied to any exsiting debiased recommendation methods. The main contributions of this paper are summarized as follows.
\begin{itemize}
    \item  We propose a principled balancing training objective with a few unbiased ratings for combating unmeaseured confounding in debiased recommendations.
    \item To optimize the objectives, we propose an efficient model-agnostic learning algorithm that alternatively updates the balancing weights and rating predictions.
    \item Extensive experiments are conducted on two real-world datasets to demonstrate the effectiveness of our proposal.
\end{itemize}

\section{Preliminaries}
Let $\{u_1, u_2, \dots, u_M\}$ be a set of $M$ users, $\{i_1, i_2, \dots, i_N\}$ be the set of $N$ items, and $\cD = \{(u_m, i_n)\mid m=1, \dots, M; n = 1, \dots, N\}$ be the set of all user-item pairs. Denote $\mathbf{R}=\{r_{u, i}\mid (u, i)\in \cD\}\in \mathbb{R}^{|\mathcal{D}|}$ be a true rating matrix, where $r_{u, i}$ is the rating of item $i$ by user $u$. However, users always selectively rate items based on their interests, resulting in observed ratings, denoted as $\mathbf{R}^\mathcal{B}\in\mathbb{R}^{|\mathcal{B}|} (\mathcal{B}\subseteq \cD)$, are missing not at random and thus biased. For a given user-item pair $(u, i)$, let $x_{u, i}$ be the feature vector of user $u$ and item $i$, such as user gender, age, and item attributes, etc. Let $o_{u, i}$ be the binary variable indicating whether $r_{u, i}$ is observed $o_{u, i}=1$ or missing $o_{u, i}=0$. Given the biased ratings $\mathbf{R}^\mathcal{B}$, the prediction model $\hat r_{u, i}=f(x_{u, i}; \theta)$ in the debiased recommendation aims to predict all true ratings accurately. Ideally, it can be trained by minimizing the prediction error between the predicted rating matrix $\mathbf{\hat R}=\{\hat r_{u, i}\mid (u, i)\in \cD\}\in \mathbb{R}^{|\mathcal{D}|}$ and the true rating matrix $\mathbf{R}$, and is given by
\begin{equation}\label{ideal}
				\cL_{ideal}(\theta)  
				=  \frac{1}{|\cD|}  \sum_{(u,i) \in \cD} \delta(r_{u,i}, \hat r_{u, i})=\frac{1}{|\cD|}  \sum_{(u,i) \in \cD} e_{u, i},\end{equation}where $\delta(\cdot, \cdot)$ is a pre-specified loss, and $e_{u,i}$ is the prediction error, such as the  squared loss  
$e_{u,i} = (\hat r_{u,i}  - r_{u,i} )^2$.

For unbiased estimates of the ideal loss in Eq. (\ref{ideal}), previous studies proposed to model the missing mechanism of the biased ratings $\mathbf{R}^\mathcal{B}$. Formally, the probability $p_{u, i} = \operatorname{Pr}(o_{u,i}=1 | x_{u,i})$ of a user $u$ rating an item $i$ is called propensity. The inverse probability scoring (IPS) estimator~\cite{Schnabel-Swaminathan2016} is given as
    \begin{align*}
    \cL_{IPS}(\theta) = \frac{1}{|\cD|} \sum_{(u,i) \in \cD}  \frac{ o_{u,i}e_{u,i} }{ \hat p_{u, i} }, 
     \end{align*}
where $\hat p_{u, i}=\pi(x_{u, i}; \phi_p)$ is an estimate of the propensity $p_{u, i}$, and the IPS estimator is unbiased when $\hat p_{u, i}=p_{u, i}$. The doubly robust (DR) estimator~\cite{Wang-Zhang-Sun-Qi2019, saito2020doubly} is given as
\begin{align*}
         \cL_{DR}(\theta) = \frac{1}{|\cD|} \sum_{(u,i) \in \cD} \Big [ \hat e_{u,i}  +  \frac{ o_{u,i} (e_{u,i} -  \hat e_{u,i}) }{ \hat p_{u, i} } \Big ], 
     \end{align*} 
where $\hat e_{u, i}=m(x_{u, i}; \phi_e)$ fits the prediction error $e_{u, i}$ using $x_{u, i}$, i.e., it estimates $g_{u, i}=$ $\mathbb{E}\left[e_{u, i} \mid x_{u, i}\right]$, and DR has double robustness, i.e., it is unbiased when either $\hat e_{u, i} = g_{u, i}$ or $\hat p_{u, i} = p_{u, i}$.

In industrial scenarios, randomized controlled trials or A/B tests are considered to be the golden standard, and users might be asked to rate randomly selected items to collect unbiased ratings, denoted as $\mathbf{R}^\mathcal{U}\in\mathbb{R}^{|\mathcal{U}|} (\mathcal{U}\subseteq \cD)$. The ideal loss can be estimated unbiasedly by simply taking the average of the prediction errors over the unbiased ratings
\begin{equation*}
		\cL_{\mathcal{U}}(\theta)  
				=  \frac{1}{|\mathcal{U}|}  \sum_{(u,i) \in \mathcal{U}}  e_{u, i}\approx\cL_{ideal}(\theta).\end{equation*}However, unbiased ratings are costly and small in scale in reality. To exploit both types of data, recent works  proposed to use unbiased ratings to correct the parameters of the propensity or imputation models trained on the biased dataset.
Learning to debias (LTD)~\cite{Wang-etal2021} and AutoDebias~\cite{Chen-etal2021} propose to use bi-level optimization, using unbiased ratings $\mathbf{R}^\mathcal{U}$ to correct the propensity and imputation model parameters, and then the prediction model is trained by minimizing the IPS or DR loss estimated on the biased ratings $\mathbf{R}^\mathcal{B}$. Formally, this goal can be formulated as
\begin{align}
\phi^* &=\arg \min _\phi \mathcal{L}_{\mathcal{U}}\left(\theta^*(\phi) ; \mathcal{U}\right) \label{eq-5}
\\
\text { s.t. } \theta^*(\phi) &=\arg \min _\theta \mathcal{L}_{\mathcal{B}}(\theta, \phi ; \mathcal{B}) \label{eq6},
\end{align}
where $\mathcal{L}_{\mathcal{B}}$ is a pre-defined loss on the biased ratings, such as IPS with $\phi=\{\phi_p\}$, DR with $\phi=\{\phi_p, \phi_e\}$, and AutoDebias with an extra propensity that $\phi=\{\phi_{p1}, \phi_{p2}, \phi_e\}$. The bi-level optimization first performs an assumed update of $\theta(\phi)$ by Eq. (\ref{eq6}), then updates the propensity and imputation model parameters $\phi$ by Eq. (\ref{eq-5}), and finally updates the prediction model parameters $\theta$ by Eq. (\ref{eq6}).
\section{Proposed Approach}
We study debiased recommendations given biased ratings with a few unbiased ratings. Different from previous studies~\cite{Schnabel-Swaminathan2016, Wang-Zhang-Sun-Qi2019, Chen-etal2021, MRDR, Wang-etal2021, Dai-etal2022, li-etal-MR2023, TDR, SDR}, we consider there may be unmesaured confounding in the biased ratings, making the unconfoundedness assumption no longer hold. In Section \ref{sec:3.1}, we show that simply using unbiased ratings to perform model selection of propensity and imputation does not eliminate the bias from unobserved confounding and model misspecification. In Section \ref{sec:3.2}, we propose a balancing training objective to combat the unobserved confounding and model misspecification by further exploiting unbiased ratings. In Section \ref{sec:3.3}, we propose an efficient model-agnostic algorithm to achieve the training objective.
\subsection{Motivation}\label{sec:3.1}
First, the unbiasedness of IPS and DR requires not only that learned propensities or imputed errors are accurate, but also the unconfoundedness assumption holds, i.e., $o_{u, i} \indep e_{u, i}\mid x_{u, i}$. However, there may exist unobserved confounding $h$, making $o_{u, i} \notindep e_{u, i}\mid x_{u, i}$ and $o_{u, i} \indep e_{u, i}\mid (x_{u, i}, h_{u, i})$. Let $\tilde p_{u, i}=\operatorname{Pr}(o_{u, i}=1\mid x_{u, i}, h_{u, i})$ be the true propensity, then the nominal propensity $p_{u, i} \neq \tilde p_{u, i}$, and Lemma \ref{lemma:1} states that the existing IPS and DR on $\mathbf{R}^\mathcal{B}$ are biased estimates of the ideal loss in the presence of unobserved confounding.
\begin{lemma}\label{lemma:1}
The IPS and DR estimators are biased in the presence of unobserved confounding, even the learned propensities and imputed errors are accurate, i.e., $\hat p_{u, i}=p_{u, i}$, $\hat e_{u, i}=g_{u, i}$, then 
\[
\bfE[\cL_{IPS}(\theta)] - \bfE[\cL_{ideal}(\theta)] = \operatorname{Cov}\left(\frac{o_{u, i}-p_{u, i}}{p_{u, i}}, e_{u, i}\right)\neq0,
\]
and
\[
\bfE[\cL_{DR}(\theta)] - \bfE[\cL_{ideal}(\theta)] = \operatorname{Cov}\left(\frac{o_{u, i}-p_{u, i}}{p_{u, i}}, e_{u, i}-g_{u, i}\right)\neq0.
\]
\end{lemma}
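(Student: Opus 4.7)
The plan is to expand each estimator around $\cL_{ideal}(\theta)$ using the algebraic identity $o_{u,i} = p_{u,i} + (o_{u,i}-p_{u,i})$, which splits every IPS-style weight $o_{u,i}/p_{u,i}$ into a ``mean-one'' piece plus a centered residual. This will cleanly isolate a single leftover term that rearranges into the advertised covariance. The key probabilistic fact I will reuse is that, by the very definition of the nominal propensity $p_{u,i} = \operatorname{Pr}(o_{u,i}=1\mid x_{u,i})$, one has $\bfE[(o_{u,i}-p_{u,i})/p_{u,i}\mid x_{u,i}] = 0$, and hence, by the tower property, the unconditional mean of this ratio is zero.

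For the IPS identity, I substitute $\hat p_{u,i}=p_{u,i}$ and apply the decomposition to each summand,
\[
\frac{o_{u,i}\,e_{u,i}}{p_{u,i}} \;=\; e_{u,i} \;+\; \frac{(o_{u,i}-p_{u,i})\,e_{u,i}}{p_{u,i}}.
\]
Taking expectations and averaging over $\cD$, the first term reproduces $\bfE[\cL_{ideal}(\theta)]$, while the second---being the expectation of a product whose first factor has mean zero---equals $\Cov{(o_{u,i}-p_{u,i})/p_{u,i},\,e_{u,i}}$, giving the first equality. For DR, the same trick applied to $o_{u,i}(e_{u,i}-g_{u,i})/p_{u,i}$ yields per pair
\[
\hat e_{u,i} + \frac{o_{u,i}(e_{u,i}-\hat e_{u,i})}{p_{u,i}} \;=\; e_{u,i} \;+\; \frac{(o_{u,i}-p_{u,i})(e_{u,i}-g_{u,i})}{p_{u,i}}
\]
after setting $\hat e_{u,i}=g_{u,i}$ and absorbing the leading $g_{u,i}$ into the difference. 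Taking expectations, the first term again reduces to $\bfE[\cL_{ideal}(\theta)]$ and the residual rearranges into $\Cov{(o_{u,i}-p_{u,i})/p_{u,i},\,e_{u,i}-g_{u,i}}$ by the same mean-zero argument.

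The main obstacle is not the algebra but establishing that these covariances are actually nonzero in the presence of unobserved confounding. Here I would condition on $(x_{u,i},h_{u,i})$ and use $\bfE[o_{u,i}\mid x_{u,i},h_{u,i}]=\tilde p_{u,i}$ together with $p_{u,i}=\bfE[\tilde p_{u,i}\mid x_{u,i}]$, so that the inner expectation of the IPS residual becomes proportional to the conditional covariance of $\tilde p_{u,i}$ and $\bfE[e_{u,i}\mid x_{u,i},h_{u,i}]$ given $x_{u,i}$. This conditional covariance vanishes only in degenerate cases---either $\tilde p_{u,i}=p_{u,i}$ (no hidden confounder affecting missingness) or $e_{u,i}\indep h_{u,i}\mid x_{u,i}$ (the confounder has no effect on the error). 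Since the lemma's setup posits $o_{u,i}\notindep e_{u,i}\mid x_{u,i}$, neither degeneracy holds, so the covariance is generically nonzero; the DR case reduces to the IPS case after replacing $e_{u,i}$ by $e_{u,i}-g_{u,i}$, noting that $g_{u,i}$ depends only on $x_{u,i}$ and so does not affect the $h$-driven portion of the covariance.
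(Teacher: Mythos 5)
Your proof is correct and follows essentially the same route as the paper's: decompose the weighted loss into the ideal loss plus a residual, then identify the residual's expectation with the covariance using $\bfE[(o_{u,i}-p_{u,i})/p_{u,i}]=0$ via the tower property (the paper does DR first and recovers IPS by setting $g_{u,i}=0$, but that is a cosmetic difference). Your closing conditional-covariance argument for why the covariance is actually nonzero is more careful than the paper, which simply asserts non-vanishing under hidden confounding.
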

\begin{proof}
For DR estimator, if $\hat{p}_{u, i}=p_{u, i}, \hat{e}_{u, i}=g_{u, i}$, we have
$$
\begin{aligned}
\bfE[\cL_{DR}(\theta)]={}& \mathbb{E}\left[e_{u, i}+\frac{o_{u, i}-p_{u, i}}{p_{u, i}}\left(e_{u, i}-g_{u, i}\right)\right] \\
={}& \bfE[\cL_{ideal}(\theta)]+\mathbb{E}\left[\frac{o_{u, i}-p_{u, i}}{p_{u, i}}\left(e_{u, i}-g_{u, i}\right)\right] \\
={}& \bfE[\cL_{ideal}(\theta)]+\operatorname{Cov}\left(\frac{o_{u, i}-p_{u, i}}{p_{u, i}}, e_{u, i}-g_{u, i}\right).
\end{aligned}
$$
The last equation follows by noting that \[\mathbb{E}\left[\frac{o_{u, i}-p_{u, i}}{p_{u, i}}\right]= \mathbb{E}\left[\mathbb{E}\left\{\frac{o_{u, i}-p_{u, i}}{p_{u, i}}\mid x_{u, i}\right\}\right]=0,\]
and $\mathbb{E}[e_{u, i}-g_{u, i}]=0$. In the presence of hidden confounding, $\operatorname{Cov}((o_{u, i}-p_{u, i}) / p_{u, i}, e_{u, i}-g_{u, i}) \neq 0$. 
The conclusions of the IPS estimator can be obtained directly from taking $g_{u, i}=0$ in DR. 
\end{proof}
In addition, the existing methods using bi-level optimization, as shown in Eq. (\ref{eq-5}) and Eq. (\ref{eq6}), simply uses unbiased ratings for parameter tuning of the propensity and imputation models. It follows that the prediction models in hypothesis space $\mathcal{H}_{\phi}=\{\cL_{\mathcal{B}}(\theta, \phi)\mid \phi \in \Phi\}$ are as a subset of DR, where $\Phi$ is the parameter space of $\phi$. Though the unbiased ratings correct partial bias, in the presence of unobserved confounding or model misspecification, i.e., $\cL_{ideal}\notin\mathcal{H}_{\phi}$, it is still biased due to the limited $\mathcal{H}_{\phi}$. 
\begin{proposition}\label{th2}
The IPS and DR estimators are biased, in the presence of (a) unobserved confounding  or (b) model misspecification.
\end{proposition}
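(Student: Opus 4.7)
The plan is to show that, in either case, $\cL_{ideal}\notin\cH_{\phi}$, where $\cH_{\phi}=\{\cL_{\mathcal{B}}(\theta,\phi):\phi\in\Phi\}$ is the hypothesis space over which the bi-level optimization in Eqs.~(4)--(5) searches; consequently the globally optimal $\phi^{\star}$ still yields an IPS or DR loss whose expectation differs from $\bfE[\cL_{ideal}(\theta)]$. Since (a) and (b) correspond to qualitatively different containment failures, I would handle them separately.

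For case (a), I would argue that the propensity model $\pi(x_{u,i};\phi_p)$ is by construction a function of the observed features alone, so no $\phi\in\Phi$ can represent the true propensity $\tilde p_{u,i}=\mathrm{Pr}(o_{u,i}=1\mid x_{u,i},h_{u,i})$, which genuinely depends on the unobserved $h_{u,i}$; the same restriction applies to any $\hat e_{u,i}$ depending only on $x_{u,i}$. The best estimate identifiable from $\{(x_{u,i},o_{u,i})\}$ is therefore the nominal propensity $p_{u,i}$. Substituting $\hat p_{u,i}=p_{u,i}$ (and, for DR, $\hat e_{u,i}=g_{u,i}$) into the decompositions already established in Lemma~\ref{lemma:1} immediately produces the stated nonzero covariance bias, so the bi-level optimum is biased regardless of how well the unbiased ratings tune $\phi$.

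For case (b), I would assume unconfoundedness holds but at least one nuisance class is misspecified, and let $\phi^{\star}$ denote the bi-level minimizer. Without assuming either $\hat p$ or $\hat e$ is correct, a direct calculation parallel to Lemma~\ref{lemma:1}, using $\bfE[o_{u,i}\mid x_{u,i}]=p_{u,i}$ and iterated expectations with $\bfE[e_{u,i}\mid x_{u,i}]=g_{u,i}$, gives
\begin{equation*}
\bfE[\cL_{DR}(\theta)]-\bfE[\cL_{ideal}(\theta)] \;=\; \bfE\Big[\Big(\tfrac{p_{u,i}}{\hat p_{u,i}}-1\Big)\bigl(g_{u,i}-\hat e_{u,i}\bigr)\Big],
\end{equation*}
and the IPS bias follows by setting $\hat e_{u,i}\equiv 0$. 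If the propensity class fails to contain $p_{u,i}$ or the imputation class fails to contain $g_{u,i}$, the corresponding factor is generically nonzero.

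The main obstacle lies entirely in case (b): misspecification does not strictly force the expectation above to be nonzero for every data-generating process, because the two factors could fortuitously integrate to zero. To rule out such measure-zero coincidences I would either phrase the conclusion as holding generically, or exhibit a minimal explicit counterexample---say, a true logistic propensity whose feature map is unavailable to the learner---so that $\hat p_{u,i}\neq p_{u,i}$ on a positive-measure set, forcing the overall bias to be strictly nonzero and completing both parts of the proposition.
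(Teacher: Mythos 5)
Your proposal is correct and follows the same route the paper takes, but the paper never actually writes out a proof of this proposition: part (a) is delegated to Lemma~\ref{lemma:1}, and part (b) is justified only by the informal remark that $\cL_{ideal}\notin\mathcal{H}_{\phi}$ when the nuisance models are misspecified. Your treatment of (a) matches the paper's intent exactly. For (b) you go further than the paper by deriving the explicit bias identity
\begin{equation*}
\bfE[\cL_{DR}(\theta)]-\bfE[\cL_{ideal}(\theta)]=\bfE\Big[\Big(\tfrac{p_{u,i}}{\hat p_{u,i}}-1\Big)\big(g_{u,i}-\hat e_{u,i}\big)\Big],
\end{equation*}
which is the standard DR decomposition under unconfoundedness and is computed correctly via iterated expectations. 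You are also right to flag that misspecification alone does not force this expectation to be nonzero --- the two factors can integrate to zero by coincidence --- and your remedy (state the claim generically, or exhibit an explicit data-generating process where the product has constant sign) is exactly what a rigorous version of the proposition needs; the paper silently ignores this point, just as it writes $\neq 0$ without qualification in Lemma~\ref{lemma:1}. In short, your argument is at least as strong as the paper's, and your part (b) is the more careful of the two.
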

Proposition \ref{th2} concludes the biased property of IPS and DR in the presence of unobserved confounding or model misspecification.
\subsection{Training Objective}\label{sec:3.2}
To combat unobserved confounding and model misspecification on biased ratings, we propose a balancing approach to fully leverage the unbiased ratings for debiased recommendations. First, when there is no unobserved confounding, we have \[\bfE[\mathcal{L}_{\mathcal{B}}(\theta, \phi ; \mathcal{B})]=\bfE[\mathcal{L}_{\mathcal{U}}\left(\theta(\phi) ; \mathcal{U}\right)].\]To obtain unbiased estimates in the presence of unmeasured confounding or model misspecification, 
we propose to enlarge the hypothesis space to include the ideal loss, from $\mathcal{H}_{\phi}$ to $\mathcal{H}_{Bal}=\{\boldsymbol{w}^{T}\cL_{\mathcal{B}}(\boldsymbol{x}; \theta, \phi)\mid \phi \in \Phi, \boldsymbol{w}\in \mathbb{R}^{|\mathcal{D}|}\}$, where $\cL_{\mathcal{B}}(\boldsymbol{x}; \theta, \phi)\in \mathbb{R}^{|\mathcal{D}|}$ consists of the contribution of $(u, i)$ to $\mathcal{L}_{\mathcal{B}}$. The effects of the unobserved confounding and model misspecification can be balanced through introducing the coefficients $w_{u, i}$ for each $(u, i)$, by making 
\begin{equation}\label{balan}
\bfE[\boldsymbol{w}^{T}\mathcal{L}_{\mathcal{B}}(\boldsymbol{
x}; \theta, \phi)]=\bfE[\mathcal{L}_{\mathcal{U}}\left(\theta(\phi) ; \mathcal{U}\right)]=\bfE[\cL_{ideal}(\theta)].    
\end{equation}
Proposition \ref{th3} is the empirical version of Eq. (\ref{balan}) in terms of the balanced IPS, DR, and AutoDebias loss.
\begin{proposition}\label{th3}
(a) There exsits $w_{u, i}>0, (u, i)\in\mathcal{B}$ such that
\[
\sum_{(u, i) \in \mathcal{B}} w_{u, i}    \frac{ e_{u,i} }{ \hat p_{u, i} }=\frac{1}{|\cU|} \sum_{(u, i) \in \mathcal{U}}  e_{u, i}.\]
(b) There exsits $w_{u, i, 1}> 0, (u, i)\in\mathcal{D}$ and $w_{u, i, 2}> 0, (u, i)\in\mathcal{B}$ such that
\[\sum_{(u, i) \in \mathcal{D}}  w_{u, i, 1} \hat e_{u,i}  +  \sum_{(u, i) \in \mathcal{B}} w_{u, i, 2}    \frac{ e_{u,i}-\hat e_{u, i} }{ \hat p_{u, i} } =\frac{1}{|\cU|} \sum_{(u, i) \in \mathcal{U}}  e_{u, i}.\]
(c) There exsits $w_{u, i, 1}> 0, (u, i)\in\mathcal{D}$ and $w_{u, i, 2}> 0, (u, i)\in\mathcal{B}$ such that
\[\sum_{(u, i) \in \mathcal{D}}  w_{u, i, 1} \frac{\hat e_{u,i}}{\hat p_{u, i, 1}}  +  \sum_{(u, i) \in \mathcal{B}} w_{u, i, 2}    \frac{ e_{u,i} }{ \hat p_{u, i, 2} } =\frac{1}{|\cU|} \sum_{(u, i) \in \mathcal{U}}  e_{u, i}.\]
\end{proposition}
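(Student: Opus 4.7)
The plan is to prove each of the three existence claims by exhibiting an explicit strictly positive solution, exploiting the fact that each balancing statement is a single scalar linear equation in a large vector of positive weights, so the feasible set enjoys abundant degrees of freedom.

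For part (a) I would set $R := \frac{1}{|\mathcal{U}|}\sum_{(u,i)\in\mathcal{U}} e_{u,i}$ and $a_{u,i} := e_{u,i}/\hat p_{u,i}$. Under the standard setting with squared loss (so $e_{u,i}\ge 0$) and strictly positive propensity estimates, the $a_{u,i}$ are non-negative. The uniform choice
\[
w_{u,i} \;=\; \frac{R}{\sum_{(u,i)\in\mathcal{B}} a_{u,i}}
\]
is then strictly positive and satisfies the required equation by inspection, provided the mild non-degeneracy conditions $R>0$ and $\sum_{\mathcal{B}} a_{u,i}>0$ hold. For parts (b) and (c) each equation has two families of positive weights constrained by a single scalar equation, so the feasible set is even larger. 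My strategy is a two-stage construction: first fix the second family of weights to a small uniform constant $\varepsilon>0$ and compute the resulting contribution $S_\varepsilon$ of its term, then choose the first family uniformly (exactly as in (a)) so that the first term equals $R-S_\varepsilon$. For (b) this uses that $\sum_{\mathcal{D}} w_{u,i,1}\hat e_{u,i}$ is a strictly increasing non-negative function of a uniform positive scaling factor whenever the imputation model outputs non-negative values and is not identically zero; for (c) the same holds for $\sum_{\mathcal{D}} w_{u,i,1}\hat e_{u,i}/\hat p_{u,i,1}$ under positive propensity estimates. Taking $\varepsilon$ small enough guarantees $R-S_\varepsilon>0$, so the required positive scaling always exists.

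The only real obstacle is the strict sign constraint on the weights; without it, a single linear equation in many unknowns is trivially satisfiable, so the whole difficulty reduces to ensuring positivity. I would resolve this through the explicit constructions above under the mild non-degeneracy assumptions that the ideal loss on $\mathcal{U}$ is strictly positive and that the relevant coefficients on $\mathcal{B}$ and $\mathcal{D}$ are not identically zero. Both conditions are satisfied generically whenever the debiasing problem is non-trivial, so all three claims of the proposition follow from these direct constructions.
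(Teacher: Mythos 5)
The paper itself offers no proof of this proposition --- it is stated as the empirical counterpart of the population balancing identity and the text moves straight on to the entropy-maximization problems --- so your explicit constructions supply an argument the paper leaves implicit rather than paralleling one. Your proof is correct under the non-degeneracy conditions you state, and it is worth stressing that some such conditions are genuinely unavoidable: with squared loss every term $w_{u,i}\,e_{u,i}/\hat p_{u,i}$ in part (a) is non-negative, so if the unbiased average $R=\frac{1}{|\mathcal{U}|}\sum_{(u,i)\in\mathcal{U}}e_{u,i}$ were zero while some $e_{u,i}>0$ on $\mathcal{B}$, no strictly positive weights could exist, and symmetrically no solution exists if all $e_{u,i}=0$ on $\mathcal{B}$ while $R>0$. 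The proposition as literally written is therefore only true generically, and making $R>0$, $\sum_{(u,i)\in\mathcal{B}}e_{u,i}/\hat p_{u,i}>0$, and ``$\hat e$ not identically zero'' explicit is the honest way to close it. The uniform rescaling in (a) and the two-stage construction in (b) and (c) --- shrinking the only sign-indefinite contribution $\sum_{(u,i)\in\mathcal{B}}w_{u,i,2}(e_{u,i}-\hat e_{u,i})/\hat p_{u,i}$ via a small uniform $\varepsilon$ so that the residual budget $R-S_\varepsilon$ stays strictly positive, then absorbing that residual with the non-negative first family --- are both sound. One caveat worth recording: your witnesses satisfy only the single balancing equation, not the additional normalization constraints that appear once the proposition is embedded into the constrained optimization problems of the following subsection; the proposition does not ask for those, so this is not a defect of your proof, but feasibility of the full programs would require rerunning the same one-extra-degree-of-freedom argument with the normalization imposed.
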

From Proposition \ref{th3}(a), when $w_{u, i}\equiv{|\cD|}^{-1}$, the left-hand side (LFS) degenerates to the standard IPS with maximal entropy of the balancing weights. The training objectives of the balanced IPS are
\begin{align}
\max_{\boldsymbol{w}\in  \mathbb{R}^{|\mathcal{B}|}}& \sum_{(u, i) \in \mathcal{B}} w_{u, i}  \log (w_{u, i})\label{eq8} \\
\text { s.t. } & ~
 w_{u, i}  > 0,\quad (u, i)\in \mathcal{B} \label{eq9} \\
 &\frac{1}{|\mathcal{B}|}\sum_{(u, i) \in \mathcal{B}} w_{u, i}=\frac{1}{|\mathcal{D}|} \label{eq10} \\
&\sum_{(u, i) \in \mathcal{B}} w_{u, i}    \frac{ e_{u,i} }{ \hat p_{u, i} }=\frac{1}{|\cU|} \sum_{(u, i) \in \mathcal{U}}  e_{u, i}, \label{eq11}
\end{align}
where the training objective in Eq. (\ref{eq8}) is to maximize the empirical entropy of the balancing weights and to be able to prevent extreme weights. The positivity and normality of the balancing weights are guaranteed by Eq. (\ref{eq9}) and Eq. (\ref{eq10}), respectively, and the influence of unobserved confounding and model misspecification is balanced out by reweighting the IPS estimates on biased ratings in Eq. (\ref{eq11}).

Similarly, for balanced DR and AutoDebias in Proposition \ref{th3}(b) and \ref{th3}(c), the estimators are re-weighted by $w_{u, i, 1}$ and $w_{u, i, 2}$ on the entire and biased user-item pairs, respectively, to combat unobserved confounding and model misspecification. The training objectives of the balanced DR are
\begin{align}
\max_{\boldsymbol{w}_1, \boldsymbol{w}_2}& \sum_{(u, i) \in \mathcal{D}} w_{u, i, 1}  \log (w_{u, i, 1})+ \sum_{(u, i) \in \mathcal{B}} w_{u, i, 2}  \log (w_{u, i, 2})\label{obj2}\\
\text { s.t. } & ~
 w_{u, i, 1}  > 0,\quad (u, i)\in \mathcal{D},\qquad w_{u, i, 2}  > 0,\quad (u, i)\in \mathcal{B}\label{eq13} \\
 &\sum_{(u, i) \in \mathcal{D}} w_{u, i, 1}=1, \quad \frac{1}{|\mathcal{B}|}\sum_{(u, i) \in \mathcal{B}} w_{u, i, 2}=\frac{1}{|\mathcal{D}|}\label{eq14} \\
 &{\small\sum_{(u, i) \in \mathcal{D}}  w_{u, i, 1} \hat e_{u,i}  +  \sum_{(u, i) \in \mathcal{B}} w_{u, i, 2}    \frac{ e_{u,i}-\hat e_{u, i} }{ \hat p_{u, i} } =\frac{1}{|\cU|} \sum_{(u, i) \in \mathcal{U}}  e_{u, i},} \label{eq15}
\end{align}
where $\boldsymbol{w}_1=[w_{u, i, 1}\mid (u, i)\in \cD]$, $\boldsymbol{w}_2=[w_{u, i, 2}\mid (u, i)\in \mathcal{B}]$, and the difference in balanced AutoDebias is that Eq. (\ref{eq15}) comes to
\begin{align}
{\small\sum_{(u, i) \in \mathcal{D}}  w_{u, i, 1} \frac{\hat e_{u,i}}{\hat p_{u, i, 1}}  +  \sum_{(u, i) \in \mathcal{B}} w_{u, i, 2}    \frac{ e_{u,i} }{ \hat p_{u, i, 2} } =\frac{1}{|\cU|} \sum_{(u, i) \in \mathcal{U}}  e_{u, i},} \label{eq16}
\end{align}
where the LFS of Eq. (\ref{eq15}) and Eq. (\ref{eq16}) degenetates to standard DR and AutoDebias, respectively, when $w_{u, i, 1}\equiv{|\cD|}^{-1}$ on $\cD$ and $w_{u, i, 1}\equiv{|\cD|}^{-1}$ on $\mathcal{B}$. Theorem \ref{the4} proves the existence of global optimal solutions corresponding to the proposed balanced IPS, DR and AutoDebias using Karush-Kuhn-Tucker conditions.
\begin{theorem}
There exists global optimal solutions to the optimization problem in balanced IPS, DR and AutoDebias. \label{the4}
\end{theorem}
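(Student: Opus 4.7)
The plan is to treat each of the three optimization problems --- balanced IPS, balanced DR, and balanced AutoDebias --- as instances of a single convex program: the objective is a separable, strictly convex function of the weights (the Shannon-entropy functional $\sum w\log w$ on the positive orthant), the constraints in Eqs.~\eqref{eq10}, \eqref{eq11}, \eqref{eq14}, \eqref{eq15}, \eqref{eq16} are linear, and the remaining constraints in Eqs.~\eqref{eq9}, \eqref{eq13} are positivity. This fits the standard template for an existence proof via Weierstrass' theorem and the Karush--Kuhn--Tucker conditions. I would present the IPS case in full and then indicate how the DR and AutoDebias arguments reduce to the same scheme by block-separating the weight vector into $(\boldsymbol{w}_1,\boldsymbol{w}_2)$ and carrying along one extra multiplier.

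For the IPS case, the feasible region is the intersection of the open positive orthant in $\mathbb{R}^{|\mathcal{B}|}$ with the two-dimensional affine subspace cut out by Eqs.~\eqref{eq10} and \eqref{eq11}. Eq.~\eqref{eq10} bounds each coordinate above by $|\mathcal{B}|/|\mathcal{D}|$, so the region is bounded; its closure is compact after relaxing $w>0$ to $w\geq 0$ and extending the objective continuously with the convention $0\log 0 := 0$. Weierstrass' theorem then delivers an optimizer on this closure, and writing the Lagrangian with multipliers $\lambda_0,\lambda_1$ attached to Eqs.~\eqref{eq10}, \eqref{eq11} yields the stationarity condition $\log w_{u,i}^\star + 1 + \lambda_0 + \lambda_1\,e_{u,i}/\hat p_{u,i} = 0$, that is,
\[
w_{u,i}^{\star} \;=\; \exp\!\bigl(-1-\lambda_0-\lambda_1\,e_{u,i}/\hat p_{u,i}\bigr) \;>\; 0,
\]
so the KKT point is automatically interior, hence a global optimizer of the original problem. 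Substituting this closed form back into Eqs.~\eqref{eq10}, \eqref{eq11} reduces the dual to two nonlinear equations in $(\lambda_0,\lambda_1)$ that are guaranteed to have a root by strict convexity of the primal plus Slater's condition. The balanced DR and AutoDebias cases follow the same recipe: the Lagrangian is block-separable across $\boldsymbol{w}_1$ and $\boldsymbol{w}_2$, producing one exponential-tilt formula per block, and compactness is inherited from the two normalization constraints in Eq.~\eqref{eq14}.

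The main obstacle I foresee is verifying Slater's condition, i.e.\ non-emptiness of the feasible region, which is where the existence argument can genuinely fail. The uniform assignment $w_{u,i}\equiv 1/|\mathcal{D}|$ satisfies Eqs.~\eqref{eq10}, \eqref{eq14} but generally violates the balancing constraint in Eq.~\eqref{eq11} (respectively \eqref{eq15}, \eqref{eq16}), so one must perturb along a direction in the null space of the normalization map that moves Eq.~\eqref{eq11}'s residual to zero while preserving positivity. This reduces to a mild genericity requirement: the vector $\{e_{u,i}/\hat p_{u,i}\}_{(u,i)\in\mathcal{B}}$ must be non-constant across $\mathcal{B}$ and the target $|\mathcal{U}|^{-1}\sum_{\mathcal{U}}e_{u,i}$ must lie strictly between the infimum and supremum of $e_{u,i}/\hat p_{u,i}$. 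For DR and AutoDebias, the analogous condition must hold jointly for the two blocks, since the balancing identity couples $\boldsymbol{w}_1$ and $\boldsymbol{w}_2$ through Eq.~\eqref{eq15} or \eqref{eq16}; I would therefore state Theorem~\ref{the4} under these mild regularity conditions and then simply observe that strict convexity of the entropy and the interior KKT point finish the proof.
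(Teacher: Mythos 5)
Your proposal takes the same route as the paper---cast each problem as a convex program with an entropy objective, affine equality constraints, and positivity constraints, then invoke Slater's condition and the KKT conditions---so strategically there is no divergence. The difference is that the paper's proof is three sentences long and you supply the two steps it elides, and both of those steps are genuinely needed. First, the paper concludes existence directly ``by the Karush--Kuhn--Tucker condition,'' but KKT is an optimality characterization, not an existence theorem; your argument---compactness of the closure of the feasible set (coordinates bounded via Eq.~(\ref{eq10}), objective extended by $0\log 0:=0$), Weierstrass, and then interiority of the optimizer from the exponential-tilt form $w^{\star}_{u,i}=\exp(-1-\lambda_0-\lambda_1 e_{u,i}/\hat p_{u,i})>0$---is the argument that actually delivers the claim. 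Second, and more importantly, you are right that the paper verifies strict feasibility only of the positivity constraints Eq.~(\ref{eq9}) and Eq.~(\ref{eq13}), whereas the binding question is whether the balancing equality Eq.~(\ref{eq11}) (resp.\ Eq.~(\ref{eq15}), Eq.~(\ref{eq16})) can be satisfied \emph{jointly} with the normalization Eq.~(\ref{eq10}) (resp.\ Eq.~(\ref{eq14})): the uniform weights meet the latter but not generally the former, and feasibility requires the unbiased target to lie in the open interval of achievable weighted sums (your range condition should carry the scaling factor $|\mathcal{B}|/|\mathcal{D}|$ induced by Eq.~(\ref{eq10}), but the substance is correct). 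The paper implicitly delegates this to Proposition~\ref{th3}, which is itself asserted without proof, so stating the theorem under your mild regularity condition is the honest formulation. One residual caution for both you and the paper: Eq.~(\ref{eq8}) as printed maximizes $\sum w\log w$, which is the \emph{negative} entropy; the convex-program reading you adopt (minimize $\sum w\log w$, equivalently maximize entropy, consistent with the paper's claim that $\lambda=0$ recovers uniform weights) is the one under which the whole argument goes through.
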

\begin{proof}
Note that the empirical entropy as the optimization objectives in Eq. (\ref{eq8}) and Eq. (\ref{obj2}) are strictly convex. The inequality constraints in Eq. (\ref{eq9}) and Eq. (\ref{eq13}) are strictly feasible, i.e., there exists $w_{u, i}$ in $\cD$ such that $w_{u, i} > 0$. The equality constraints are affine in Eq. (\ref{eq10}), Eq. (\ref{eq11}), Eq. (\ref{eq14}), and Eq. (\ref{eq15}). By the Karush-Kuhn-Tucker condition, there exist global optimal solutions.
\end{proof}
Theoretically, due to the convexity of the objective function, its local optimal solution is same as the global optimal solution. The generalized Lagrange multiplier method can be used to solve the primal and the dual problem, and such balancing weights can effectively combat the unobserved confounding as in Proposition \ref{th3}.

\subsection{Training Algorithm}\label{sec:3.3}
Next, we propose an efficient mode-agnostic training algorithm to achieve the training objective in Section \ref{sec:3.2}. The algorithm consists of three parts: first, training the propensity and imputation models using a bi-level optimization, \emph{but without updating the prediction model;} then, reparameterizing and updating the gradients of the balancing weights to combat the effects of unobserved confounding and model misspecification; and finally, \emph{minimizing the estimated balancing loss,} named Bal-IPS, Bal-DR, or Bal-AutoDebias, and updating the prediction model to achieve unbiased learning.

\subsubsection{Propensity and Imputation Model Training}
Different from LTD and AutoDebias that use bi-level optimization to update the prediction model, we only perform assumed updates of the prediction model parameters $\theta(\phi)$ using bi-level optimization by Eq. (\ref{eq6}), and updates of the propensity and imputation model parameters $\phi$ by Eq. (\ref{eq-5}). Since there may exist unobserved confounding or model misspecification, we postpone the true update of the prediction model parameters $\theta$ to Section 3.3.3, after performing the balancing steps  in Section 3.3.2. We summarized the propensity and imputation model training algorithm in Alg. \ref{alg1}.
\begin{algorithm}[t]
\caption{\small Propensity and Imputation Model Training}
\label{alg1}
\LinesNumbered
\KwIn{$S$, $\mathbf{R}^\mathcal{B}$, $\mathbf{R}^\mathcal{U}$, $\phi_0$, $\theta_0$, $\eta$}
\For{$s = 0, \dots, S-1$}{
Sample mini-batches $\mathcal{B}_s \subseteq \mathcal{B}$ and $\mathcal{U}_s \subseteq \mathcal{U}$\;
Compute the lower loss in Eq. (\ref{eq6}) on $\mathcal{B}_s$\;
Compute an assumed update  $\theta_{s+1}(\phi_s)\leftarrow \theta_s-\eta\nabla_{\theta_s}\mathcal{L}_{\mathcal{B}}(\theta, \phi ; \mathcal{B}_s)$\;
Compute the upper loss in Eq. (\ref{eq-5}) on $\mathcal{U}_s$\;
Update the propensity and imputation model $\phi_{s+1}\leftarrow \phi_s-\eta\nabla_{\phi_s}\mathcal{L}_{\mathcal{U}}(\theta_{s+1}(\phi) ; \mathcal{B}_s)$\;
}
\KwOut{$\phi_S$}
\end{algorithm}

\subsubsection{Balancing Unobserved Confounding Training}
One challenge in solving the balancing optimization problem is that as the number of user-item pairs increases, the number of balancing weights also increases, resulting in a significant increase in solution time for large-scale datasets. To address this issue, we propose to \emph{reparameterize} $w_{u, i}$ in the balanced IPS, i.e., $w_{u, i} = g(x_{u, i}; \xi)$, where $\xi$ is the balancing model parameter. To satisfy the optimization constraints Eq. (\ref{eq9}) and Eq. (\ref{eq10}), the last layer of $g(x_{u, i}; \xi)$ uses $\textsc{Sigmoid}$ as the activation function to guarantee positivity and batch normalization to guarantee normality. The balancing weights in the balanced IPS are trained by minimizing the negative empirical entropy with the violation of the balanced constraint Eq. (\ref{eq11}) as regularization
{\begin{align}
\cL_{W-IPS}(\xi)=&-\sum_{(u, i) \in \mathcal{B}} w_{u, i}  \log (w_{u, i})\notag\\
&+{} \lambda \Bigg[ \sum_{(u, i) \in \mathcal{B}}  w_{u, i} \frac{e_{u,i}}{\hat p_{u, i}}  -\frac{1}{|\cU|} \sum_{(u, i) \in \mathcal{U}}  e_{u, i}\Bigg ]^2,\notag
\end{align}}where $\lambda>0$ is a hyper-parameter, for trade-off the original loss estimation with the correction due to the unobserved confounding. 

Similarly, $w_{u, i,1}$ and $w_{u, i,2}$ in the balanced DR and balanced AutoDebias are also reparameterized as $w_{u, i, 1}=g(x_{u, i}; \xi_1)$ and $w_{u, i, 2}=g(x_{u, i}; \xi_2)$. The balancing weights in the balanced DR and balanced AutoDebias are trained by minimizing
{\small
\begin{align}
\cL&_{W-DR}(\xi)=-\sum_{(u, i) \in \mathcal{D}} w_{u, i, 1}  \log (w_{u, i, 1})-\sum_{(u, i) \in \mathcal{B}} w_{u, i, 2}  \log (w_{u, i, 2})\notag\\
&+{} \lambda \Bigg[ \sum_{(u, i) \in \mathcal{D}}  w_{u, i, 1} \hat e_{u,i}  +\sum_{(u, i) \in \mathcal{B}} w_{u, i, 2}    \frac{ e_{u,i}-\hat e_{u, i} }{ \hat p_{u, i} } -\frac{1}{|\cU|} \sum_{(u, i) \in \mathcal{U}}  e_{u, i}\Bigg]^2,\notag
\end{align}}
and
{\small
\begin{align}
\cL&_{W-Auto}(\xi)=-\sum_{(u, i) \in \mathcal{D}} w_{u, i, 1}  \log (w_{u, i, 1})-\sum_{(u, i) \in \mathcal{B}} w_{u, i, 2}  \log (w_{u, i, 2})\notag\\
&+{} \lambda \Bigg[ \sum_{(u, i) \in \mathcal{D}}  w_{u, i, 1} \frac{\hat e_{u,i}}{\hat p_{u, i, 1}}  +\sum_{(u, i) \in \mathcal{B}} w_{u, i, 2}    \frac{ e_{u,i}}{ \hat p_{u, i, 2} } -\frac{1}{|\cU|} \sum_{(u, i) \in \mathcal{U}}  e_{u, i}\Bigg ]^2,\notag
\end{align}}where $\lambda>0$ is a hyper-parameter, and $\xi\equiv\{\xi_1, \xi_2\}$ are the parameters of the balancing model.

\subsubsection{Prediction Model Training}
Since the optimization of the balancing weights aims to balance the prediction errors on the biased and unbiased ratings, which also depends on the prediction model, we propose to update the balancing model and the prediction model alternatively. Specifically, given the balancing weights of IPS, the prediction model is trained by minimizing the balanced IPS (Bal-IPS)
{\begin{align}
\cL_{Bal-IPS}(\theta)=\sum_{(u, i) \in \mathcal{B}}  w_{u, i} \frac{e_{u,i}}{\hat p_{u, i}}.
\end{align}}

Similarly, for balanced DR (Bal-DR) or balanced AutoDebias (Bal-AutoDebias), the prediction model is trained by minimizing
\begin{align}
\cL&_{Bal-DR}(\theta)=\sum_{(u, i) \in \mathcal{D}}  w_{u, i, 1} \hat e_{u,i}  +\sum_{(u, i) \in \mathcal{B}} w_{u, i, 2}    \frac{ e_{u,i}-\hat e_{u, i} }{ \hat p_{u, i}},
\end{align}
and
\begin{align}
\cL&_{Bal-Auto}(\theta)=\sum_{(u, i) \in \mathcal{D}}  w_{u, i, 1} \frac{\hat e_{u,i}}{\hat p_{u, i, 1}}  +\sum_{(u, i) \in \mathcal{B}} w_{u, i, 2}    \frac{ e_{u,i}}{ \hat p_{u, i, 2}}.
\end{align}
Next, given the prediction model, the balancing weights are updated again as described in Section 3.3.2. The balancing weights and the prediction model are updated alternately, allowing a more adequate use of unbiased ratings, resulting in unbiased learning of the prediction model.

The main difference compared with LTD~\cite{Wang-etal2021} and AutoDebias~\cite{Chen-etal2021} is that we do not only use unbiased ratings to select the parameters of the propensity and imputation models, and then use standard IPS or DR for the prediction model update. Instead, we combat the effects of unobserved confounding by introducing a balancing model, and then perform prediction model updates based on the balanced losses. Remarkably, the proposed method is model-agnostic and can be applied to any of the debiased recommendation methods. Here we use IPS, DR and AutoDebias for illustration.  We summarized the whole training algorithm in Alg. \ref{alg2}.
\begin{algorithm}[t]
\caption{\small Balancing Unobserved Confounding Training}
\label{alg2}
\LinesNumbered
\KwIn{$T$, $S$, $\mathbf{R}^\mathcal{B}$, $\mathbf{R}^\mathcal{U}$, $\phi_0$, $\theta_0$, $\xi_0$, $\eta$, $\lambda$}
\For{$t = 0, \dots, T-1$}{
Call Alg. 1 by $\phi_{t+1}\leftarrow\text{Alg. 1}(S, \mathbf{R}^\mathcal{B}, \mathbf{R}^\mathcal{U}, \phi_t, \theta_t, \eta)$\;
\For{$s = 0, \dots, S-1$}{
Sample mini-batches $\mathcal{D}^s_t \subseteq \mathcal{D}$, $\mathcal{B}^s_t \subseteq \mathcal{B}$ and $\mathcal{U}^s_t \subseteq \mathcal{U}$\;
Compute unmeasured confounding balancing loss\;
Update the balancing weight $\xi_t^{s+1}\leftarrow\xi_t^{s}-\eta\nabla_{\xi_t^{s}}\cL_{W}(\xi)$\;
Compute the balanced prediction error loss\;
Update the prediction model $\theta_t^{s+1}\leftarrow\theta_t^{s}-\eta\nabla_{\theta_t^{s}}\cL_{Bal}(\theta)$\;
}
Copy the balancing model’s parameters $\xi_{t+1}^0 \leftarrow \xi_t^S$\;
Copy the prediction model’s parameters $\theta_{t+1}^0 \leftarrow \theta_t^S$\;
}
\KwOut{$\theta_T$}
\end{algorithm}
 
\subsubsection{Training Efficiency}
\begin{figure}[t]
    \centering
    \includegraphics[scale=0.42]{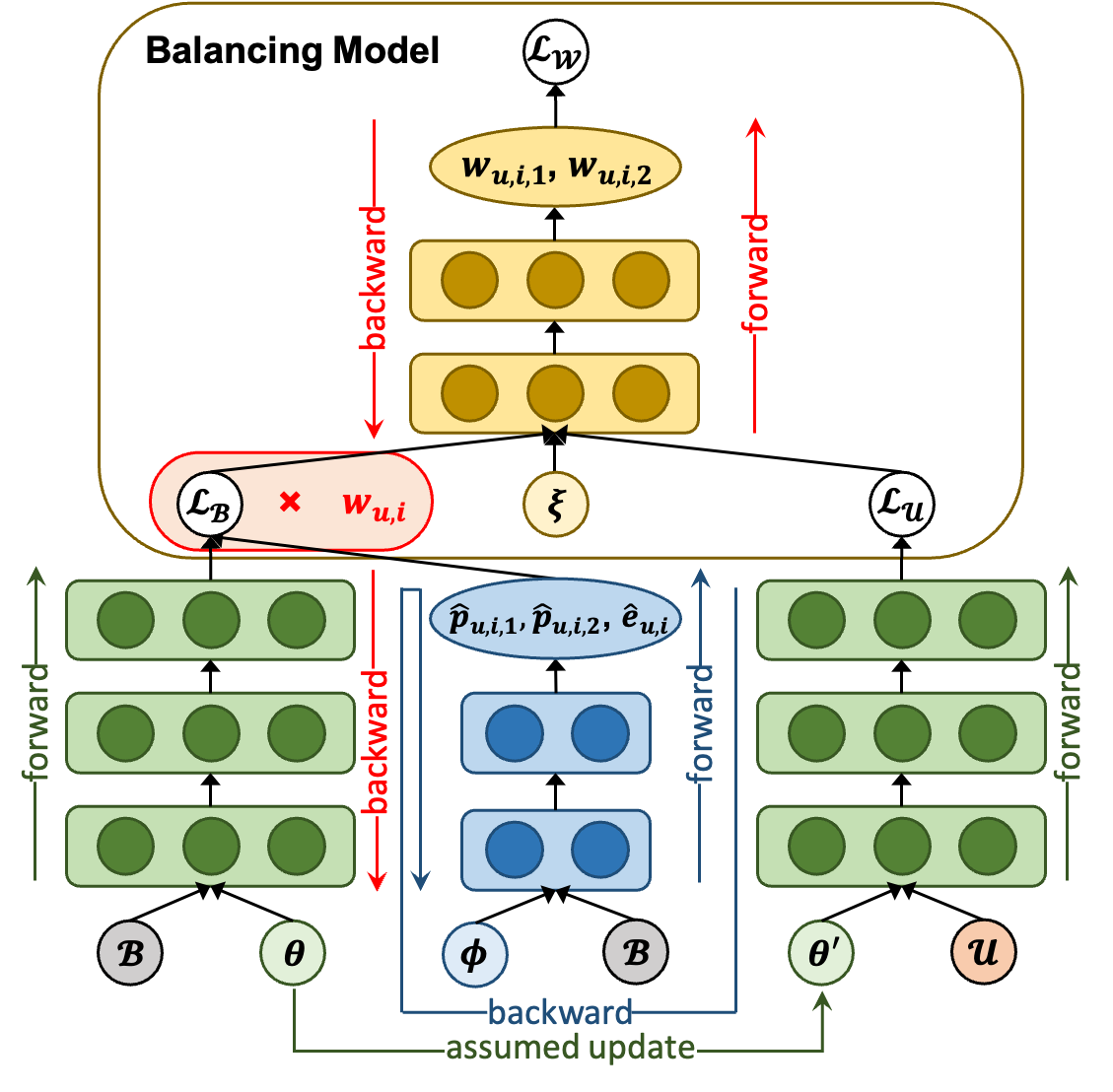}
    \vspace{-12pt}
    \caption{The proposed workflow for balancing unobserved confounding consists of four steps: (1) assumed updating the prediction model parameters from $\theta(\phi)$ to $\theta^{\prime}(\phi)$ using $\mathbf{R}^\mathcal{B}$ (green arrow); (2) updating the propensity and imputation model parameters $\phi$ using $\mathbf{R}^\mathcal{U}$ (blue arrow); (3) updating the balancing model parameters $\phi$ using both $\mathbf{R}^\mathcal{B}$ and $\mathbf{R}^\mathcal{U}$ (red arrow); (4) actually updating the prediction model parameters $\theta$ using the balanced loss $\boldsymbol{w}^{T}\mathcal{L}_{\mathcal{B}}$ (red arrow).}
    \label{fig:share} 
    \vspace{-6pt}
\end{figure}
\begin{table}[t]
 \centering
  \small
 \setlength{\tabcolsep}{4.7pt}
\captionof{table}{Summary of the datasets.}\label{data}
\vspace{-0.3cm} 
\begin{tabular}{ccccccc}
\toprule
             &  Users& Items &Training &Uniform &Validation &Test\\ \midrule
\textsc{Music}       & 15,400 & 1,000  & 311,704  & 2,700  & 2,700 & 48,600\\
\textsc{Coat}       & 290 & 300 & 6,960 & 232 & 232 & 4,176\\ \bottomrule
\end{tabular}
\vskip -0.1in
\end{table}
The proposed workflow for balancing the unobserved confounding is shown in Figure \ref{fig:share}. In Section 3.3.1, our algorithm performs two forward and backward passes for the prediction model on $\mathbf{R}^\mathcal{B}$ and $\mathbf{R}^\mathcal{U}$, respectively, and one forward and backward pass for the propensity and imputation model on $\mathbf{R}^\mathcal{B}$. The backward-on-backward pass is used to obtain the gradients of the propensity and imputation models. In Section 3.3.2, one forward and one reverse pass are performed for the balancing model. In Section 3.3.3, a backward pass is used to actually update the prediction model. We refer to~\cite{Ren-etal2018, Wang-etal2021} that the running time of a backward-on-backward pass and a forward pass are about the same. As a result, the training time of the proposed algorithm does not exceed 3x learning time compared to two-stage learning and about 1.5x learning time compared to LTD and AutoDebias.
\section{Real-world Experiments}
\begin{table*}[]  
\centering
\setlength{\tabcolsep}{4pt}
\caption{Performance comparison in terms of AUC, NDCG@5, and NDCG@10. The best results to each base method are bolded.}

\vspace{-10pt}
\label{tab:performance}
\begin{tabular}{c|cccccc|cccccc}
\toprule
\multirow{2}{*}{Method}          & \multicolumn{6}{c|}{\textsc{Music}}                                                       & \multicolumn{6}{c}{\textsc{Coat}}                                                                              \\
\cmidrule(lr){2-7}
\cmidrule(lr){8-13}
& AUC                             & RI    & NDCG@5                           & RI    & NDCG@10                             & RI    & AUC                           & RI    & NDCG@5                             & RI    & NDCG@10                          & RI     \\ \midrule
CausE        & 0.731                           & -     & 0.551                         & -     & 0.656                          & -     & 0.761                           & -     & 0.500                           & -     & 0.605                           & -      \\\midrule
KD-Label               & 0.740                          & -     &  0.580                         & -     & 0.680                         & -     & 0.750                           & -     &  0.504                           & -     & 0.610                    & -      \\ \midrule
MF (biased)               & 0.727                           & -     & 0.550                           & -     & 0.655                           & -     & 0.747                           & -     & 0.500                           & -     & 0.606                           & -      \\
MF (uniform)          & 0.573                    & -     & 0.449         & -           & 0.591                   & -                      & 0.579    & -     & 0.358 & -                           & 0.482                        & -   \\
MF (combine)               & 0.730                           & -     & 0.554                           & -     & 0.659                           & -     & 0.750                           & -     & 0.503                      & -     & 0.611                           & - 
 \\
Bal-MF            & \textbf{0.739}                          & 1.23\%  & \textbf{0.579}              & 4.51\%   & \textbf{0.679}  & 3.03\% & \textbf{0.761}  & 1.47\% & \textbf{0.511}                          & 1.59\%       & \textbf{0.620}          & 1.47\% 
\\ \midrule
IPS                & 0.723                           & -     & 0.549                           & -     & 0.656                    & -     & 0.760                           & -     & 0.509                           & -     & 0.613                           & -      \\
Bal-IPS            & \textbf{0.727}                           & 0.55\% & \textbf{0.564}                           & 2.73\% & \textbf{0.668} & 1.83\% & \textbf{0.771} & 1.45\% & \textbf{0.521}                           & 2.36\% & \textbf{0.628}                           & 2.45\% \\ \midrule
DR                & 0.724                           & -     & 0.550                           & -     & 0.656                           & -     & 0.765                           & -     & 0.521                           & -     & 0.620                           & -      \\
Bal-DR            & \textbf{0.731}                           & 0.97\% &\textbf{0.569}                           & 3.45\% & \textbf{0.669} & 1.98\% & \textbf{0.770} & 0.65\% & \textbf{0.523}                           & 0.38\% & \textbf{0.628}                           & 1.29\% \\ \midrule
AutoDebias        & 0.741                           & -     & 0.645                           & -     & 0.725                           & -     & 0.766                            & -     & 0.522                          & -     & 0.621                           & -      \\
Bal-AutoDebias    & \textbf{0.749}                           & 1.08\% & \textbf{0.670}                           & 3.88\% & \textbf{0.744}                           & 2.62\% & \textbf{0.772}                           & 0.78\% & \textbf{0.544}                           & 4.21\% & \textbf{0.640} & 3.06\% \\ \bottomrule
\end{tabular}
\begin{tablenotes}
\footnotesize
\item {Note: RI refers to the relative improvement of Bal-methods over the corresponding baseline.}
\end{tablenotes}
\vspace{-8pt}
\end{table*}
In this section, We conduct extensive experiments on two real-world datasets to answer the following research questions (RQs):
\begin{enumerate}
 \item[\bf RQ1.]  Do the proposed Bal-methods improve the debiasing performance compared with the existing methods?
 \item[\bf RQ2.]  Do our methods stably perform well with different initializations of the prediction model?
 \item[\bf RQ3.]  How does the balancing model affect the performance of our methods?
 \item[\bf RQ4.] What factors influence the effectiveness of our methods?
    \end{enumerate}

\subsection{Experimental Setup}
{\bf Dataset and preprocessing.} Following the previous studies~\citep{Wang-Zhang-Sun-Qi2019, saito2020doubly,Wang-etal2021, Chen-etal2021}, we conduct extensive experiments on the two widely used real-world datasets with both missing-not-at-random (MNAR) and missing-at-random (MAR) ratings: {\bf \textsc{{Music}}\footnote{http://webscope.sandbox.yahoo.com/}} and {\bf \textsc{{Coat}}\footnote{https://www.cs.cornell.edu/\textasciitilde schnabts/mnar/}}. 
In particular, \textsc{Music} dataset contains 15,400 users and 1,000 items with 54,000 MAR and 311,704 MNAR ratings. \textsc{Coat} dataset contains 290 users and 300 items with 4,640 MAR and 6,960 MNAR ratings.   Following~\citep{liu2020general,Chen-etal2021}, we take all the biased data as the training set and randomly split the uniform data as three parts: 5\% for balancing the unobserved confounding, 5\% for validation set and 90\% for test set. We summarize the datasets and splitting details in Table \ref{data}.
\\
{\bf Baselines.}
In our experiments, we compare the proposed Bal-methods with the following baselines: \\
$\bullet$ {\bf Base Model} ~\cite{koren2009matrix}: the Matrix Factorization (MF) model is trained on biased data, uniform data and both of them respectively, denoted as MF (biased), MF (uniform) and MF (combine).\\
$\bullet$ {\bf  Inverse Propensity Scoring (IPS)}~\cite{Schnabel-Swaminathan2016}: a reweighting method using inverse propensity scores to weight the observed events. \\
$\bullet$ {\bf  Doubly Robust (DR)}~\cite{Wang-Zhang-Sun-Qi2019, saito2020doubly}: an efficient method combining imputations and inverse propensities with double robustness. \\
$\bullet$ {\bf  CausE}~\cite{liu2020general}: a sample-based knowledge distillation approach to reduce computational complexity. \\
$\bullet$ {\bf  KD-Label}~\cite{liu2020general}: an efficient framework for knowledge distillation to transfer unbiased information to teacher model and guide the training of student model. \\
$\bullet$ {\bf  AutoDebias}~\cite{Chen-etal2021}: a meta-learning based method using few unbiased data to further mitigate the selection bias.  \\
{\bf Experimental protocols and details.} Following~\cite{Wang-etal2021, Chen-etal2021}, AUC, NDCG@5 and NDCG@10 are adopted as the evaluation metrics to measure the debiasing performance. Formally,
\[
A U C=\frac{\sum_{(u, i) \in \mathcal{U}^{+}} \hat{Z}_{u, i}-\left|\mathcal{U}^{+}\right|\cdot(\left|\mathcal{U}^{+}\right|+1) / 2}{\left|\mathcal{U}^{+}\right| \cdot(\left|\mathcal{U}\right|-\left|\mathcal{U}^{+}\right|)},
\]
and NDCG@k measures the quality of ranking list as
{\small
\[
D C G_u @ k=\sum_{(u, i) \in \mathcal{U}} \frac{\mathbb{I}(\hat{Z}_{u, i} \leq k)}{\log (\hat{Z}_{u, i}+1)},\quad
N D C G @ k=\frac{1}{M} \sum_{m=1}^M \frac{D C G_{u_m} @ k}{I D C G_{u_m}@ k},
\]}where $\mathcal{U}^{+}\subseteq \cU$ denotes the positive ratings in the uniform dataset, $\hat{Z}_{u, i}$ is the rank position of $(u, i)$ given by the rating predictions, and $ID C G_{u_m} @ k$ is the ideal $D C G_{u_m} @ k$.

All the methods are implemented on PyTorch. Throughout, \textsc{Adam} optimizer is utilized for propensity and imputation model with learning rate and weight decay in [1e-4, 1e-2]. \textsc{SGD} optimizer is utilized for prediction model and balancing model with learning rate in [1e-7, 1] and weight decay in [1e-4, 1]. We tune the regularization hyper-parameter $\lambda$ in \{$0, 2^{-9}$, $2^{-6}$, $2^{-3}$, 1\}. All hyper-parameters are tuned based on the performance on the validation set. 

\subsection{Performance Comparison (RQ1)}
\begin{table*}[]  
 \centering
 \setlength{\tabcolsep}{3pt}
 \captionof{table}{Performance of the Bal-methods under different prediction models as initializations on \textsc{Music} and \textsc{Coat}.}
\vspace{-10pt}
\label{tab3}
\begin{tabular}{c|c|ccc|ccc|ccc}
\toprule
        & Initial Method   & \multicolumn{3}{c|}{Initial with IPS} & \multicolumn{3}{c|}{Initial with DR} & \multicolumn{3}{c}{Initial with AutoDebias} \\ \midrule
Dataset & Method           & AUC     & NDCG@5    & NDCG@10    & AUC    & NDCG@5    & NDCG@10    & AUC       & NDCG@5      & NDCG@10      \\
\midrule
        & Baseline         & 0.723         & 0.549           & 0.656            & 0.724        & 0.550           & 0.656           & 0.741          & 0.645             & 0.725             \\
\multirow{2}{*}{\textsc{Music}}   & Bal-IPS        & ${0.726}_{0.4\% \uparrow}$       & $0.561_{2.2\% \uparrow}$          & $0.666_{1.5\% \uparrow}$           & $0.726_{0.3\% \uparrow}$        & $0.562_{2.2\% \uparrow}$           & $0.666_{1.5\% \uparrow}$           & $0.747_{0.8\% \uparrow}$          & $0.656_{1.7\% \uparrow}$            & $0.733_{1.1\% \uparrow}$            \\
        & Bal-DR         & $0.725_{0.3\% \uparrow}$        & $0.556_{1.3\% \uparrow}$          & $0.665_{1.4\% \uparrow}$           & $0.726_{0.3\% \uparrow}$       & $0.559_{1.6\% \uparrow}$          & $0.667_{1.7\% \uparrow}$           &  $0.748_{0.9\% \uparrow}$          & $0.658_{2.0\% \uparrow}$            & $0.734_{1.2\% \uparrow}$             \\
        & Bal-AutoDebias   &  $\textbf{0.739}_{2.2\% \uparrow}$         &  $\textbf{0.584}_{6.4\% \uparrow}$           &  $\textbf{0.683}_{4.1\% \uparrow}$           &  $\textbf{0.740}_{2.2\% \uparrow}$       &  $\textbf{0.586}_{6.5\% \uparrow}$          &  $\textbf{0.684}_{4.3\% \uparrow}$           &  $\textbf{0.749}_{1.1\% \uparrow}$          &  $\textbf{0.670}_{3.9\% \uparrow}$            &  $\textbf{0.744}_{2.6\% \uparrow}$      \\ \midrule
        & Baseline         & 0.760        & 0.509          & 0.613           & 0.765        & 0.521          & 0.620           & 0.766          & 0.522            & 0.621             \\
\multirow{2}{*}{\textsc{Coat}}    & Bal-IPS        &  $\textbf{0.771}_{1.4\% \uparrow}$          & $0.521_{2.4\% \uparrow}$        &  $0.628_{2.4\% \uparrow}$           &  $0.770_{0.7\% \uparrow}$       & $0.523_{0.4\% \uparrow}$          & $0.627_{1.1\% \uparrow}$           &  $0.770_{0.5\% \uparrow}$          & $0.523_{0.2\% \uparrow}$             & $0.629_{1.3\% \uparrow}$             \\
        & Bal-DR         & $0.770_{1.3\% \uparrow}$        & $0.523_{2.8\% \uparrow}$          &  $0.628_{2.4\% \uparrow}$          &  $0.771_{0.8\% \uparrow}$       & $0.522_{0.2\% \uparrow}$          &  $0.629_{1.5\% \uparrow}$          &  $0.770_{0.5\% \uparrow}$         &  $0.523_{0.2\% \uparrow}$           &  $0.629_{1.3\% \uparrow}$            \\
        & Bal-AutoDebias   & $\textbf{0.771}_{1.4\% \uparrow}$         &  $\textbf{0.531}_{4.3\% \uparrow}$         &  $\textbf{0.632}_{3.1\% \uparrow}$           & $\textbf{0.772}_{0.9\% \uparrow}$       &  $\textbf{0.539}_{3.5\% \uparrow}$          &  $\textbf{0.637}_{2.7\% \uparrow}$           & $\textbf{0.772}_{0.8\% \uparrow}$          &  $\textbf{0.544}_{4.2\% \uparrow}$            &  $\textbf{0.640}_{3.1\% \uparrow}$             \\ \bottomrule
\end{tabular}
\vspace{-6pt}
\end{table*}
Table \ref{tab:performance} compares the prediction performance of the various methods on two real-world datasets \textsc{Music} and \textsc{Coat}. We find that the proposed model-agnostic Bal-methods have significantly improved performance when applied to MF, IPS, DR and AutoDebias with respect to all metrics. Overall, Bal-AutoDebias exhibits the best performance. Impressively, although AutoDebias hardly improves the performance on \textsc{Coat} compared with DR as reported in \cite{Chen-etal2021}, the proposed Bal-AutoDebias improves 4.21\% and 3.06\% on NDCG@5 and NDCG@10 compared with the best baseline, respectively, validating the effectiveness of the proposed balancing approach.

In addition, MF using only uniform data exhibits the worst performance, due to its small size which causes unavoidable overfitting. Directly combining the biased and unbiased ratings increases the MF performance slightly and insignificantly. As in ~\cite{Chen-etal2021}, AutoDebias has the most competitive performance among the existing methods, due to the use of unbiased ratings for the parameter selection of the propensity and imputation models. However, as discussed in previous sections, the previous methods were unable to combat the potential unobserved confounding in the biased data. The proposed Bal-methods address this issue by further utilizing unbiased ratings to balance the loss estimates from biased ratings.


\begin{table}[t]
\centering
\small
\caption{Effects of balancing models on Bal-AutoDebias.}
\vspace{-10pt}
\setlength{\tabcolsep}{2pt}
\label{tab4}
\begin{tabular}{c|c|ccc|ccc}
\toprule
\multicolumn{2}{c|}{Method}          & \multicolumn{3}{c|}{\textsc{Music}}                                                       & \multicolumn{3}{c}{\textsc{Coat}}          \\ \midrule
$w_{u, i, 1}$ & $w_{u, i, 2}$  & AUC     & NDCG@5        & NDCG@10                   & AUC        & NDCG@5         & NDCG@10                
 \\ \midrule
MF & MF        & 0.749                  & 0.670                  & 0.744             & 0.772            & 0.544              & 0.640          \\
MF & NCF        & 0.745                  & 0.667                  & 0.742             & 0.769            & 0.539              & 0.635    \\
NCF & MF        & \textbf{0.762}                  & \textbf{0.675}                  & \textbf{0.748}             & \textbf{0.774}            & \textbf{0.548}              & \textbf{0.646}  \\
NCF & NCF        & 0.749                  & 0.671                  & 0.745             & 0.771            & 0.545              & 0.639  \\
\bottomrule
\end{tabular}
\vspace{-12pt}
\end{table}

\begin{figure}[t]
\centering
\subfigure[\textsc{Music}-AUC]{
\begin{minipage}[t]{0.47\linewidth}
\centering
\includegraphics[width=1\textwidth]{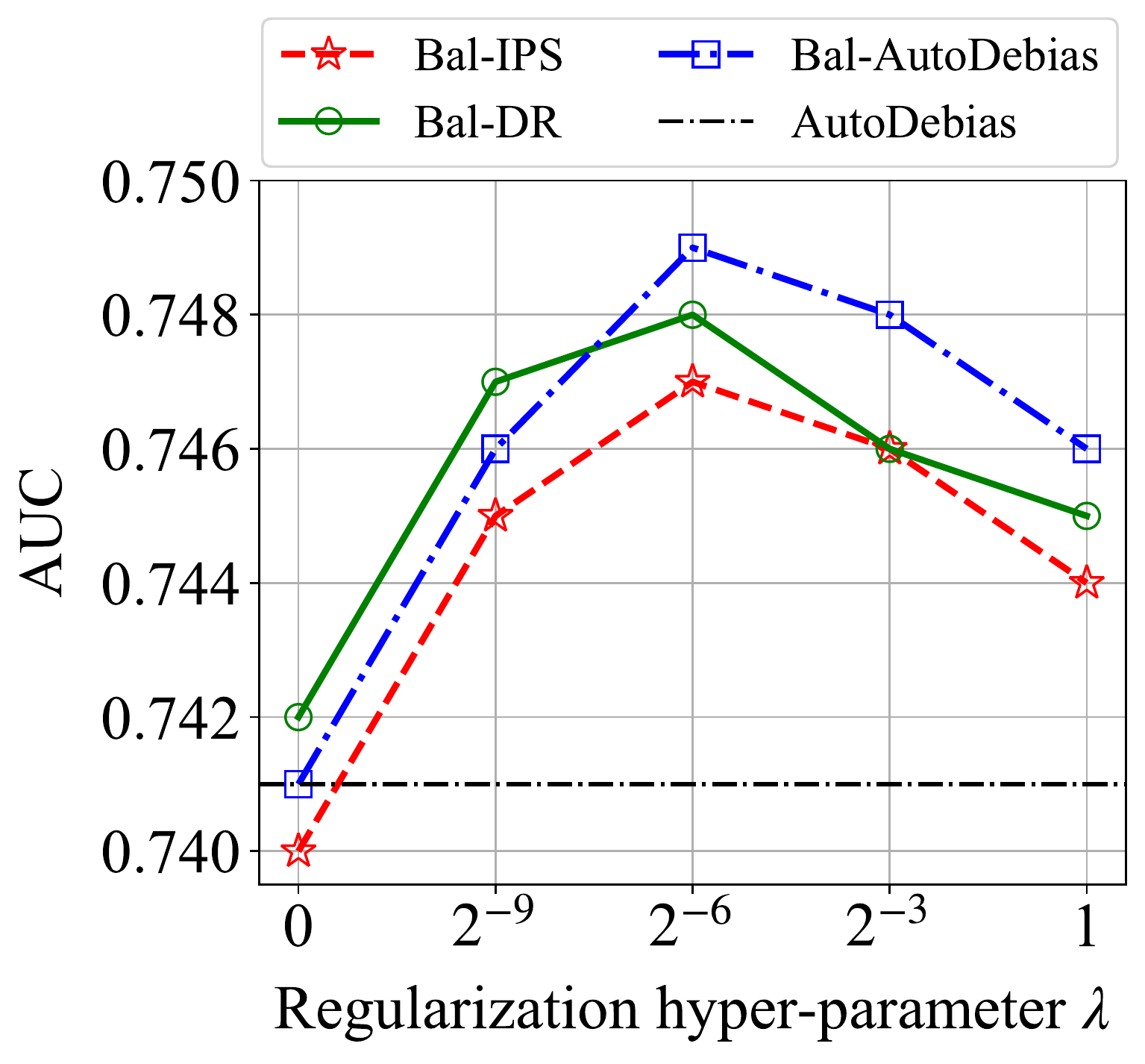}
\end{minipage}%
}%
\subfigure[\textsc{Coat}-AUC]{
\begin{minipage}[t]{0.47\linewidth}
\centering
\includegraphics[width=1\textwidth]{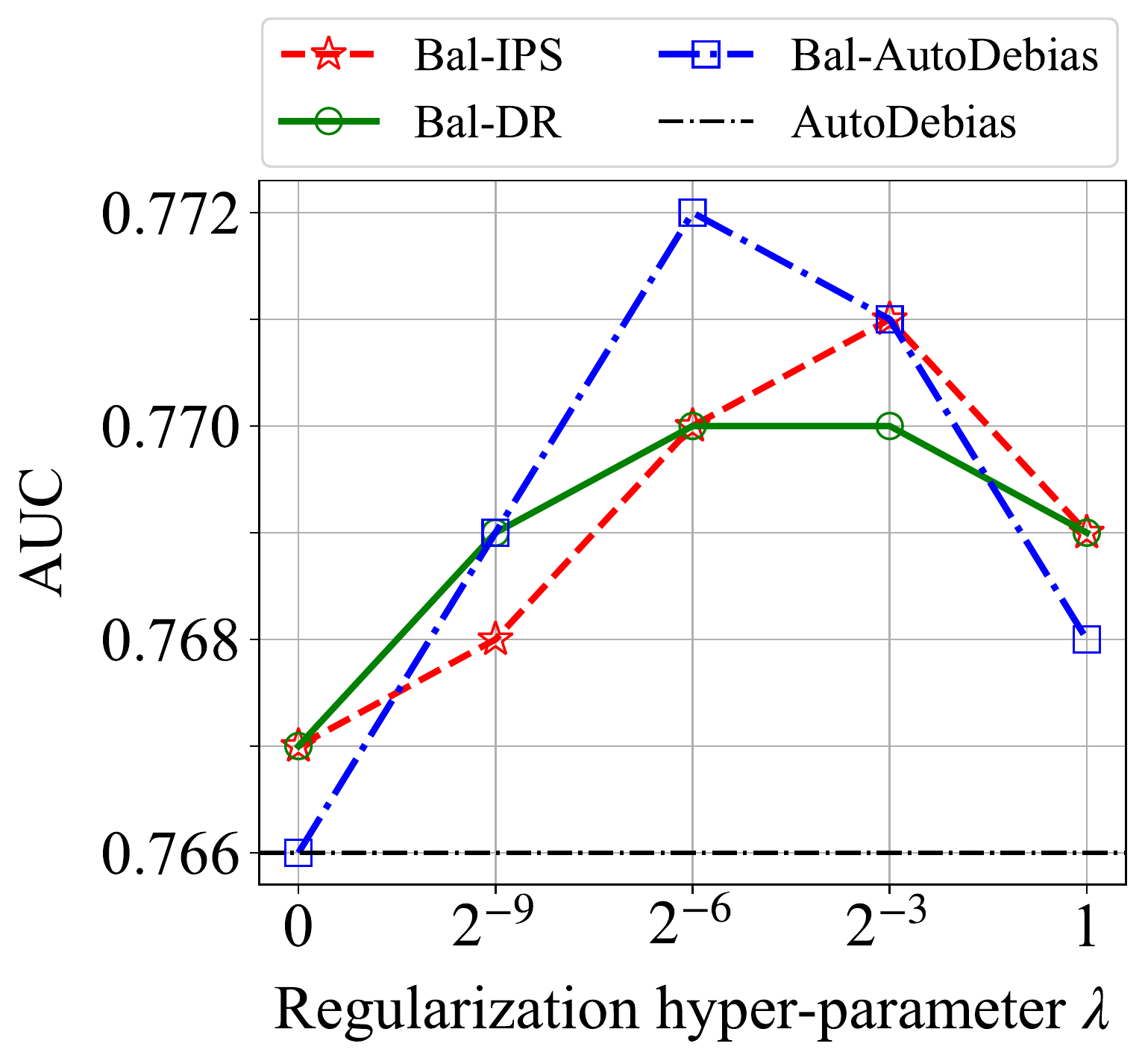}
\end{minipage}%
}%
\centering
\vspace{-8pt}

\centering
\subfigure[\textsc{Music}-NDCG@5]{
\begin{minipage}[t]{0.47\linewidth}
\centering
\includegraphics[width=1\textwidth]{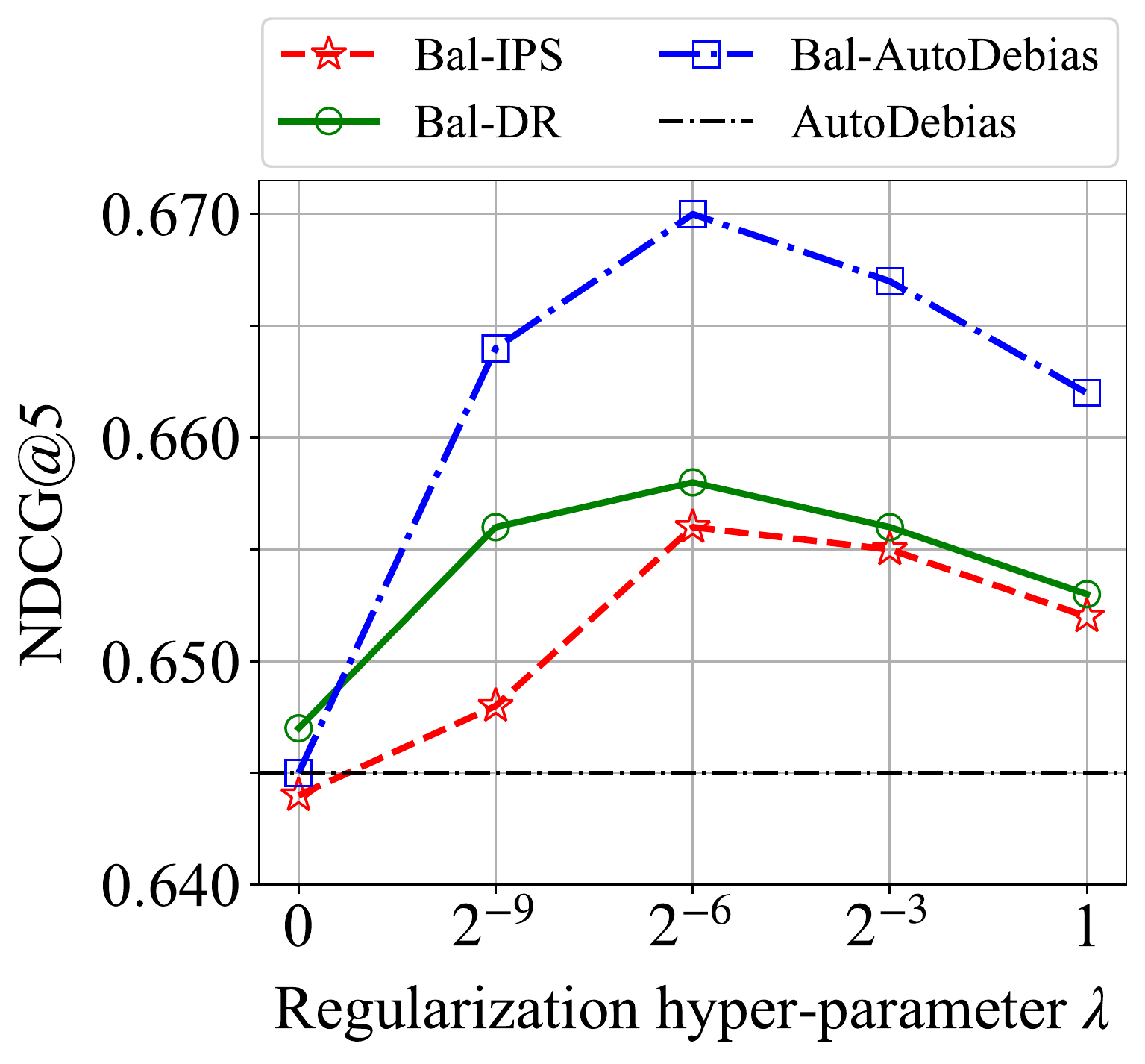}
\end{minipage}%
}
\subfigure[\textsc{Coat}-NDCG@5]{
\begin{minipage}[t]{0.47\linewidth}
\centering
\includegraphics[width=1\textwidth]{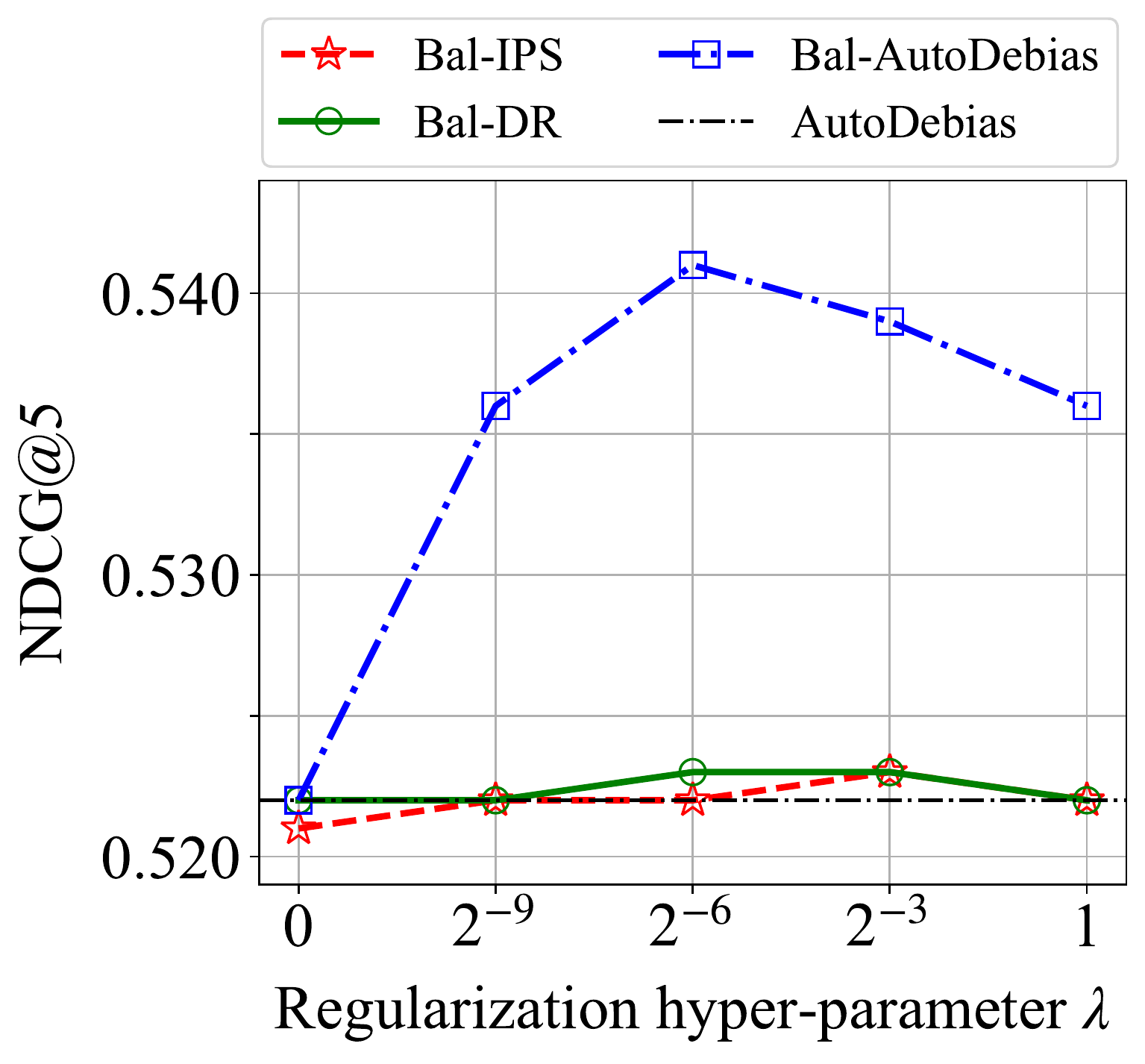}
\end{minipage}%
}%
\centering
\vspace{-8pt}

\centering
\subfigure[\textsc{Music}-NDCG@10]{
\begin{minipage}[t]{0.47\linewidth}
\centering
\includegraphics[width=1\textwidth]{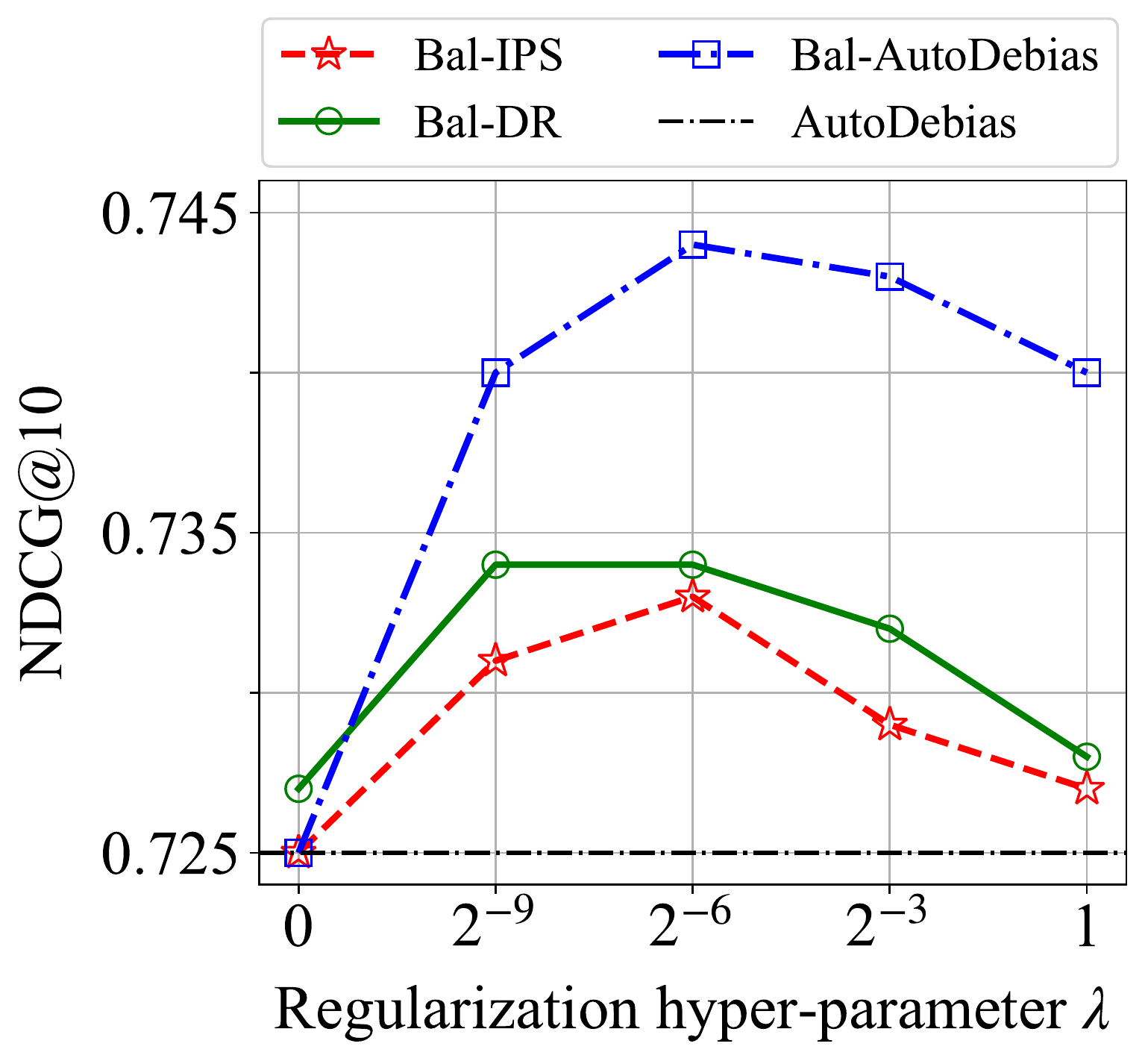}
\end{minipage}%
}%
\subfigure[\textsc{Coat}-NDCG@10]{
\begin{minipage}[t]{0.47\linewidth}
\centering
\includegraphics[width=1\textwidth]{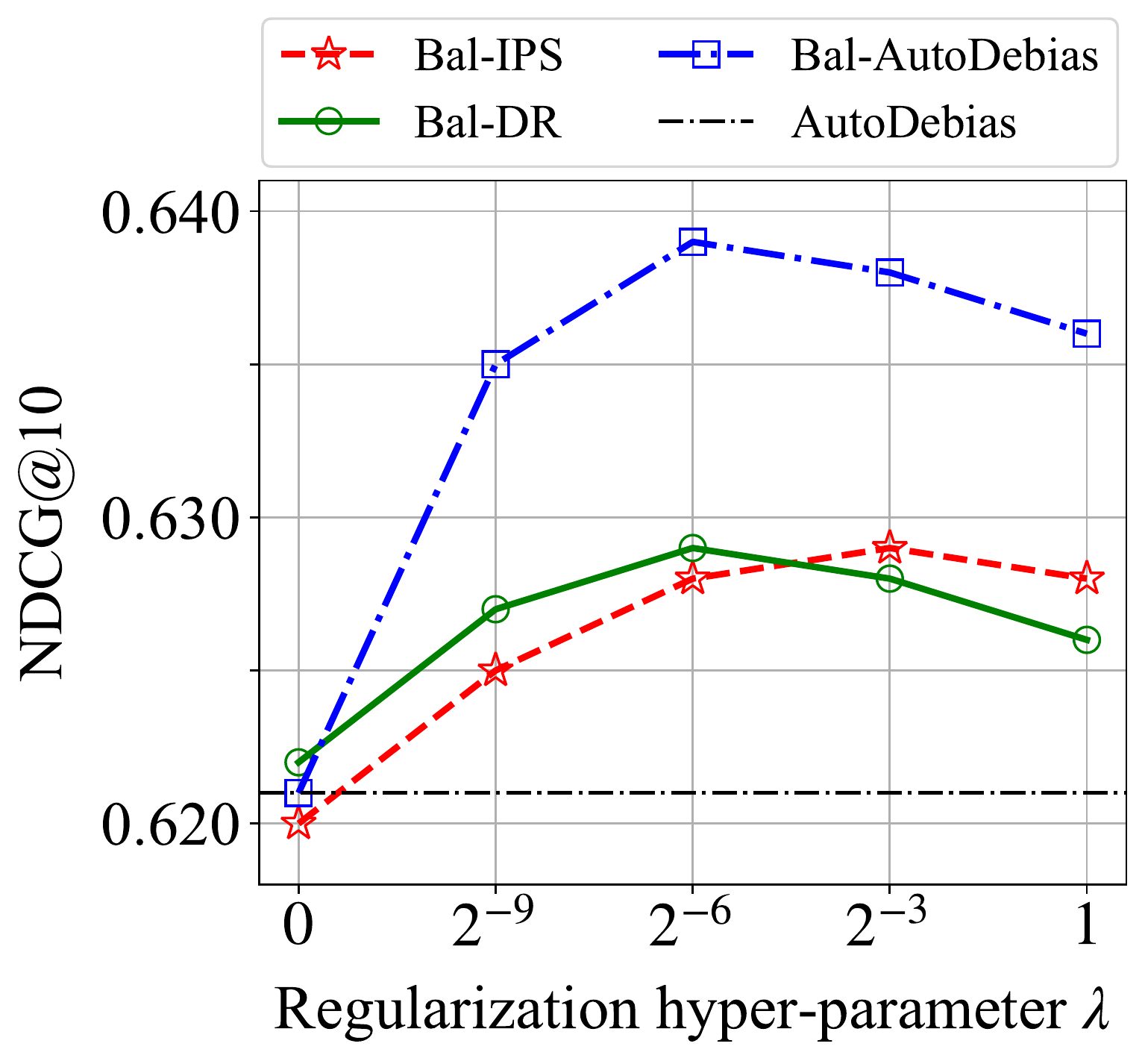}
\end{minipage}%
}%
\centering
\vspace{-8pt}
\caption{Effect of regularization strength $\lambda$ on \textsc{Music} and \textsc{Coat}, degenerating to standard AutoDebias when $\lambda=0$.}
\label{fig:confounder}
\vspace{-18pt}
\end{figure}

\begin{figure}[t]
\centering
\subfigure[\textsc{Music}-NDCG@5]{
\begin{minipage}[t]{0.495\linewidth}
\centering
\includegraphics[width=1\textwidth]{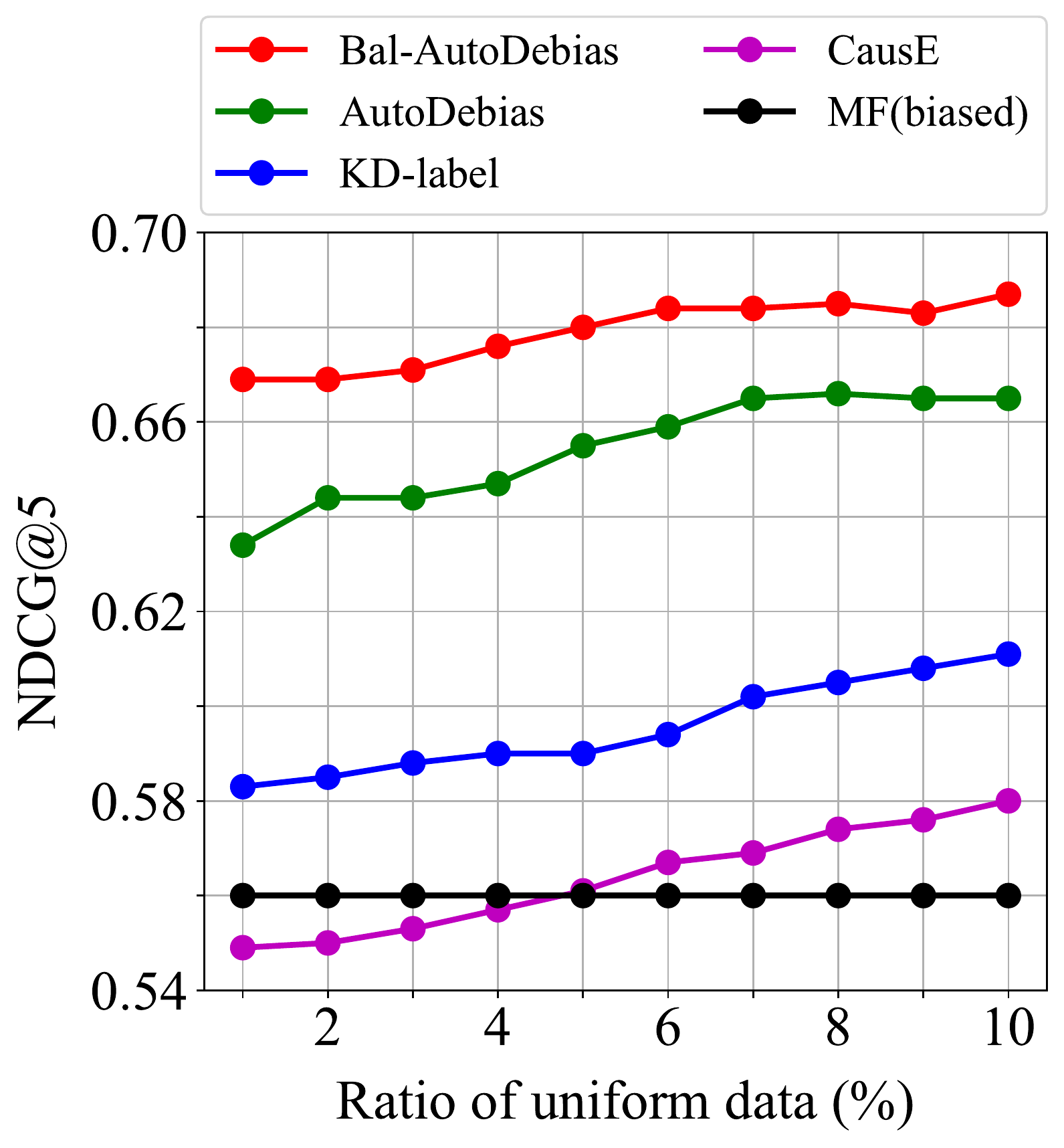}
\end{minipage}%
}%
\subfigure[\textsc{Music}-NDCG@10]{
\begin{minipage}[t]{0.495\linewidth}
\centering
\includegraphics[width=1\textwidth]{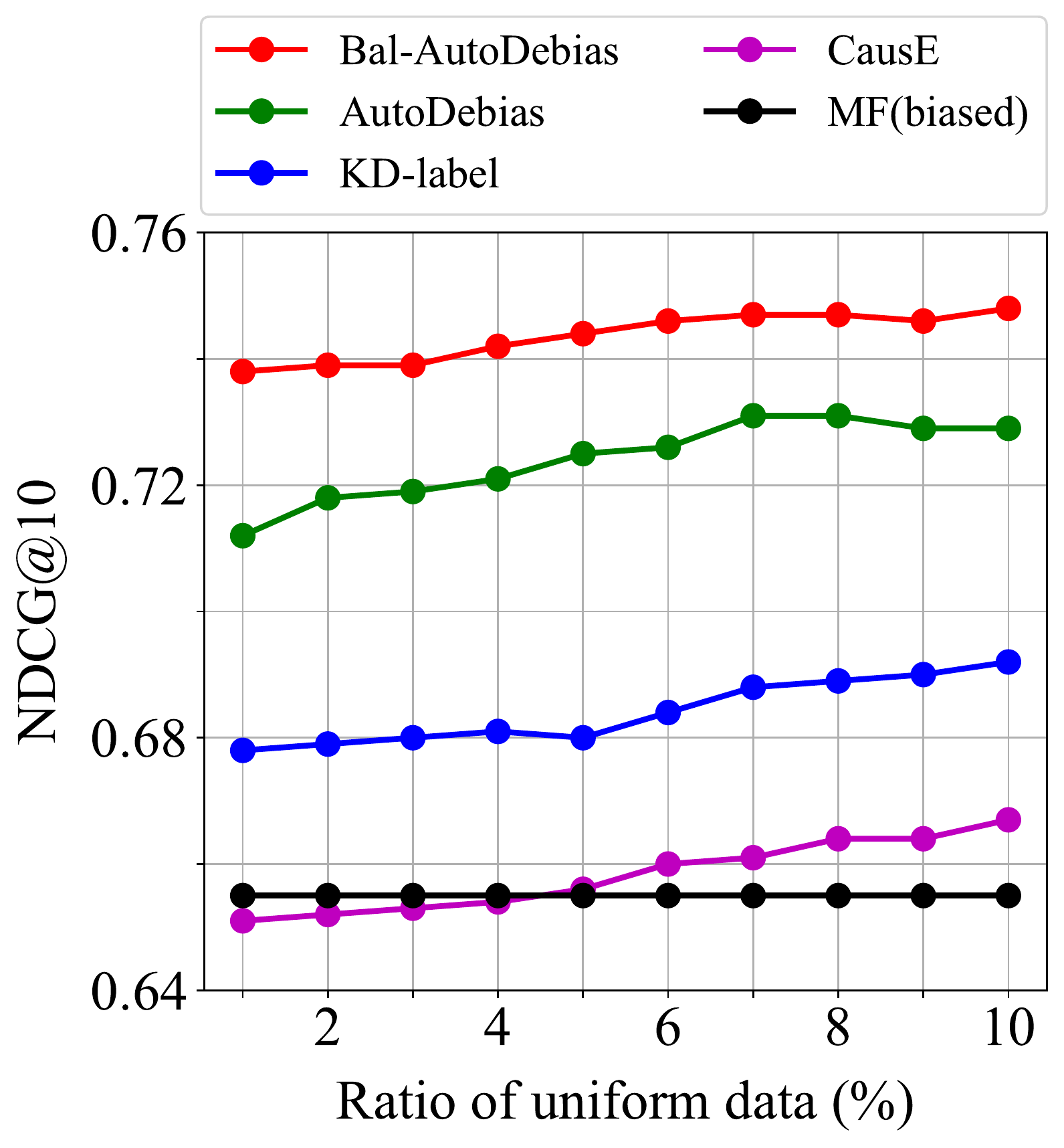}
\end{minipage}%
}%
\centering
\vspace{-8pt}

\centering
\subfigure[\textsc{Coat}-NDCG@5]{
\begin{minipage}[t]{0.495\linewidth}
\centering
\includegraphics[width=1\textwidth]{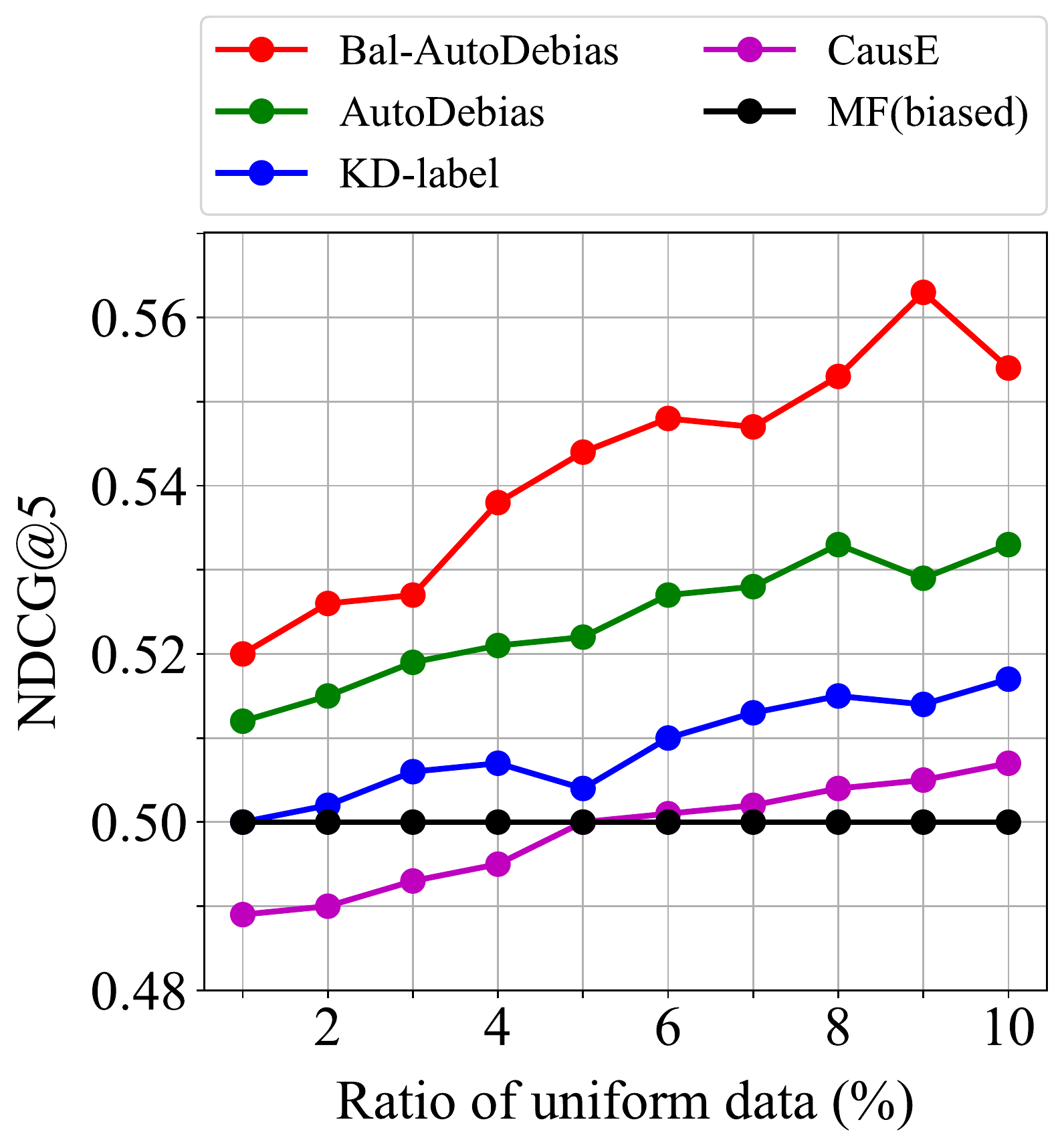}
\end{minipage}%
}%
\subfigure[\textsc{Coat}-NDCG@10]{
\begin{minipage}[t]{0.495\linewidth}
\centering
\includegraphics[width=1\textwidth]{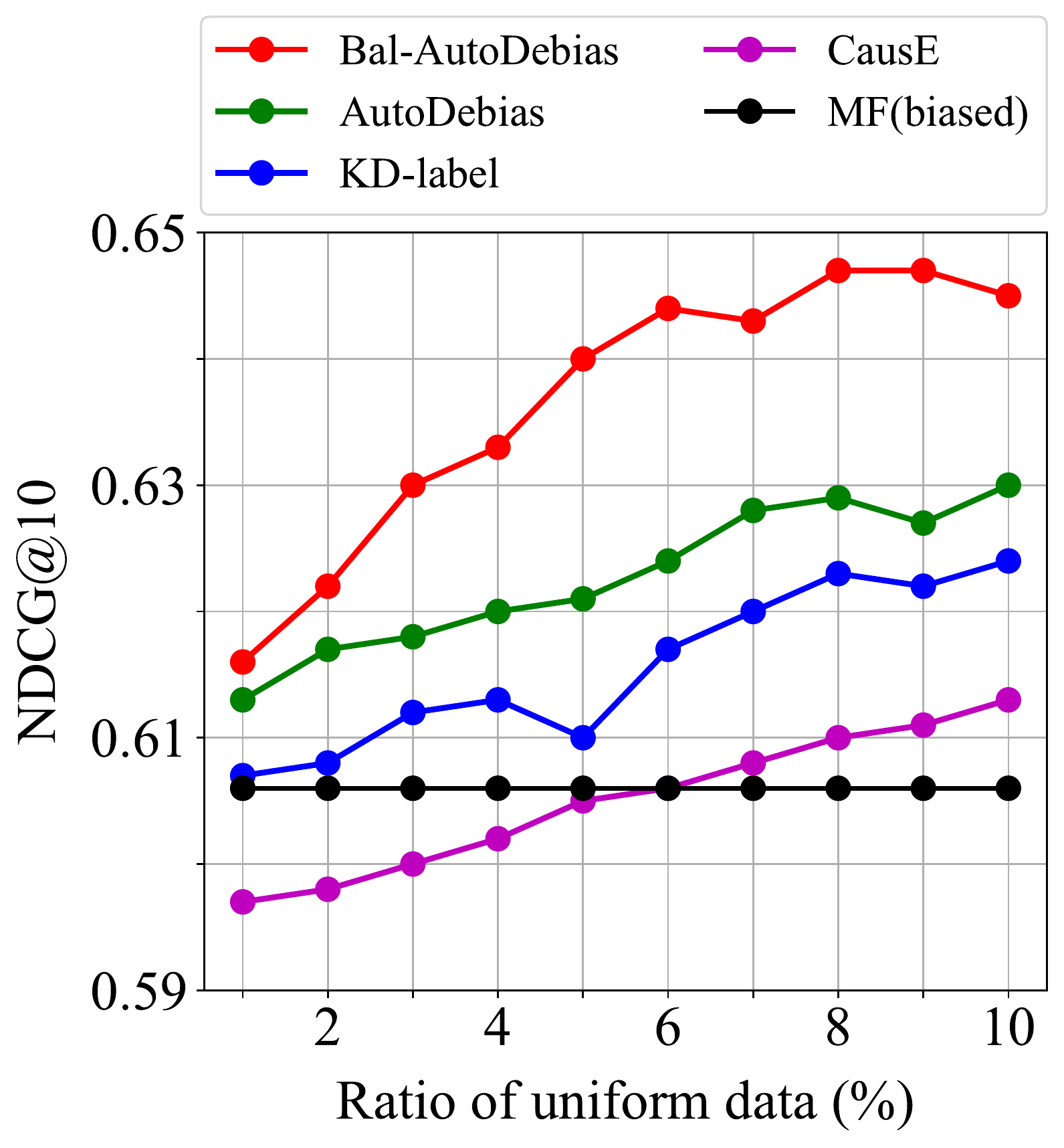}
\end{minipage}%
}%
\centering
\vspace{-8pt}
\caption{Effect of varying size of uniform data.}
\vspace{-6pt}
\label{fig4}
\end{figure}

\subsection{In-depth Analysis (RQ2)}
We further conduct an in-depth analysis by using the pre-trained prediction model parameters given by IPS, DR and AutoDebias as initialization in Alg. \ref{alg2}, respectively, to verify that the proposed Bal-methods can be effectively applied to any existing debiasing methods. The results are presented in Table \ref{tab3}. We find that all Bal-methods show significant performance improvement in all metrics compared to the pre-trained prediction models. Notably, applying the Bal-methods to any initialized predictions can stably boost the performance compared with AutoDebias on \textsc{Coat}, which can be explained by the possible presence of unobserved confounding and model misspecification in the biased data, while our method can mitigate the potential bias via a model-agnostic manner.

\subsection{Ablation Study (RQ3)}
To explore the impact of the proposed balancing model on the debiasing performance, we conduct ablation experiments using varying regularization hyperparameters $\lambda$ for trade-offs between the original loss estimation and the correction due to the unobserved confounding. Note that when $\lambda=0$, the globally optimal balancing weights equal to $1/|\cD|$ with maximum entropy, degenerating to the standard IPS, DR and AutoDebias. We tune $\lambda$ in \{0, $2^{-9}$, $2^{-6}$, $2^{-3}$, 1\} on Bal-IPS, Bal-DR and Bal-AutoDebias, and the results are shown in Figure \ref{fig:confounder}, where the black dashed line is used as the most competitive baseline for reference. We find that the AUC and NDCG@K of all methods first increase and then decrease with the increasing constraint strength, with optimal performance around $\lambda=2^{-6}$. This is interpreted as the best tradeoff between estimated loss and unobserved confounding. All methods using $\lambda>0$ stably outperform the standard AutoDebias and the case without considering unobserved confounding, i.e., $\lambda=0$, so it can be concluded that the proposed balancing model plays an important role in the debiasing.

\subsection{Exploratory Analysis (RQ4)}
{\bf Effect of balancing model selections.} We further explore the effect of model selections on the balanced weights to the debiasing performance. Specifically, we take different combinations of MF and NCF as balancing models for $w_{u, i, 1}$ on $\cD$ and $w_{u, i, 2}$ on $\mathcal{B}$, and the results are shown in Table \ref{tab4}. The performance can be significantly improved when NCF and MF are used to model $w_{u, i, 1}$ and $w_{u, i, 2}$, respectively. We argue that the main reason is that $|\cD| \gg |\mathcal{B}|$, leading to a reasonable reparameterization of $w_{u, i, 1}$ using deep models (e.g., NCF), and $w_{u, i, 2}$ using simple models (e.g., MF).
\smallskip \\
{\bf Effect of uniform data size.} Figure \ref{fig4} shows the sensitivity of the debiasing methods to the size of the uniform data ranging from 1\% to 10\%. We find that the proposed Bal-AutoDebias stably outperforms the existing methods for varying sizes of unbiased ratings. For the previous methods, AutoDebias has a more competitive performance compared with KD-label and CausE. When providing with a small size (e.g., 1\%) of the unbiased ratings, CausE performs even worse than the biased MF, while Bal-AutoDebias achieves the optimal performance. Compared with AutoDebias, our methods make significant improvements on both NDCG@5 and NDCG@10, validating the effectiveness of the proposed balancing learning.


\vspace{-0.2cm} 
\section{Related Work}
\noindent
{\bf Debiased Recommendation.} Recommender algorithms are often trained based on the historical interactions. However, the historical data cannot fully represent the user's true preference~\cite{Chen-etal2020, Wu-etal2022}, because user behavior is affected by various factors, such as conformity~\cite{DBLP:conf/recsys/LiuCY16} and item popularity~\cite{zhang2021causal}, etc. Many methods were developed for achieving unbiased learning, aiming to capture the true user preferences with biased data. For example, \citep{Schnabel-Swaminathan2016} noticed the missing data problem in RS and recommended using the IPS strategy to remove the bias,  
 \cite{Wang-Zhang-Sun-Qi2019} designed a doubly robust (DR) loss and 
 suggested adopting the joint learning method for model training.  Subsequently, several approaches enhanced the DR method by pursuing a better bias-variance trade-off~\cite{MRDR, Dai-etal2022}, leveraging parameter sharing and multi-task learning technique~\cite{Multi_IPW, ESMM, wang2022escm}, combing a small uniform dataset~\cite{bonner2018causal,Chen-etal2021,liu2020general,Wang-etal2021}, addressing the problem of small propensities and weakening the reliance on extrapolation~\cite{SDR}, and reducing bias and variance simultaneously when the imputed errors are less accurate~\cite{TDR}.  In addition, \cite{li-etal-MR2023} proposed a multiple robust learning method that allows the use of multiple candidate propensity and imputation models and is unbiased when any of the propensity or imputation models is accurate.   
 \cite{Chen-etal2020, Wu-etal2022} reviewed the recent progress in debiased recommendation. To mitigate the effects of unobserved confounding, \cite{ding2022addressing} proposed an adversarial learning method that uses only biased ratings. Unlike the existing methods, this paper combats the effect of unmeasured confounding with a small uniform dataset to achieve exact unbiasedness.   

  

\smallskip \noindent 
{\bf {Causal Inference under Unmeasured Confounding.}}   
Unmeasured confounding is a difficult problem in causal inference and the main strategies for addressing it can be divided into two classes~\cite{bareinboim2016causal,hunermund2019causal,kallus2018removing,li2021improving}. One is the sensitivity analysis~\cite{christopher2002identification, Kallus-Zhou2018, Rosenbaum2020} 
 that seeks bounds for the true causal effects with datasets suffering from unmeasured confounders. The other class methods aim to obtain unbiased causal effect estimators by leveraging some auxiliary information, such as  
      instrument variable methods~\cite{IV-1996, Hernan-Robins2020}, front door adjustment~\cite{Pearl-2009}, and negative control~\cite{NegControl}. In general, finding a reliable instrument variable or a mediator that satisfies the front door criterion~\cite{Hernan-Robins2020, imbens2020potential} is a challenging task in practice.             
Different from these methods based on an observational dataset, this paper considers a more practical scenario in debiased recommendations, i.e., addressing unmeasured confounding by fully exploiting the unbiasedness property of a small uniform dataset.  
\vspace{-0.2cm} 
\section{Conclusion}
This paper develops a method for balancing unobserved confounding with few unbiased ratings. We first show theoretically that previous methods that simply using unbiased ratings to select propensity and imputation model parameters is not sufficient to combat the effects of unobserved confounding and model misspecification. We then propose a balancing optimization training objective, and further propose a model-agnostic training algorithm to achieve the training objective using reparameterization techniques. The balancing model is alternately updated with the prediction model to combat the effect of unobserved confounding. We conduct extensive experiments on two real-world datasets to demonstrate the superiority of the proposed approach. To the best of our knowledge, this is the first paper using a few unbiased ratings to combat the effects of unobserved confounding in debiased recommendations. For future works, we will derive theoretical generalization error bounds for the balancing approaches, as well as explore more effective ways to leverage the unbiased ratings to enhance the debiasing performance of the prediction models.

\section{Acknowledgments}
This work was supported by the National Key R\&D Program of China (No. 2018YFB1701500 and No. 2018YFB1701503). 

\bibliographystyle{ACM-Reference-Format}
\bibliography{reference}

\end{document}